\newcommand{\yb}{\textbf{y}}
\newcommand{\ybt}{\textbf{y}_{1:T}}
\newcommand{\ybo}{\textbf{y}_{o}}
\newcommand{\ybm}{\textbf{y}_{m}}
\newcommand{\zbt}{\textbf{z}_{0:T}}
\newcommand{\mbt}{\textbf{m}_{1:T}}
\newcommand{\bb}{\boldsymbol{\beta}}
\newcommand{\zz}{\boldsymbol{\zeta}}
\newcommand{\xx}{\boldsymbol{\xi}}
\newcommand{\tb}{\boldsymbol{\theta}}
\newcommand{\pp}{\boldsymbol{\phi}}
\newcommand*{\centernot}{%
  \mathpalette\@centernot
}
\def\@centernot#1#2{%
  \mathrel{%
    \rlap{%
      \settowidth\dimen@{$\m@th#1{#2}$}%
      \kern.5\dimen@
      \settowidth\dimen@{$\m@th#1=$}%
      \kern-.5\dimen@
      $\m@th#1\not$%
    }%
    {#2}%
  }%
}
\newtheorem{proposition}{Proposition} 
\newtheorem{definition}{Definition} 
\title{Bayesian Nonhomogeneous hidden Markov models to leverage routine in physical activity monitoring with informative wear time}
\author
{Beatrice Cantoni  \\
Department of Statistics and Data Sciences, University of Texas at Austin \\ \texttt{beatrice.cantoni@utexas.edu}
\AND
Savannah V. Rauschendorfer  \\
Department of Health, Human Performance, and Recreation, Baylor University \\ \texttt{savannah\_rauschendor@baylor.edu}
\AND 
Michael E. Roth
\\
Department of Pediatrics, The University of Texas MD Anderson Cancer Center \\ \texttt{mroth1@mdanderson.org} 
\AND
J. Andrew Livingston
\\
Department of Sarcoma Medical Oncology, The University of Texas MD Anderson Cancer Center \\ \texttt{jalivingston@mdanderson.org} 
\AND
Eugenie S. Kleinerman
 \\
Department of Pediatrics Research, The University of Texas MD Anderson Cancer Center \\ \texttt{ekleiner@mdanderson.org}
\AND
Corwin M. Zigler\\
Department of Biostatistics, Brown University School of Public Health \\ \texttt{corwin\_zigler@brown.edu } }
\begin{document}

\maketitle
\newpage 
\begin{abstract}
Missing data is among the most prominent challenges in the analysis of physical activity (PA) data collected from wearable devices, with the threat of nonignorabile missingness arising when patterns of device wear relate to underlying activity patterns.  We offer a rigorous consideration of assumptions about missing data mechanisms in the context of the common modeling paradigm of state space models with a finite, meaningful, set of underlying PA states. Focusing in particular on hidden Markov models, we identify inherent limitations in the presence of missing data when covariates are required to satisfy common missing data assumptions.   In response to this limitation, we propose a Bayesian non-homogeneous state space model that can accommodate covariate dependence in the transitions between latent activity states, which in this case relates to whether patients' routine behavior can inform how they transition between PA states and thus support imputation of missing PA data. We show the benefits of the proposed model for missing data imputation and inference for relevant PA summaries.  Our development advances analytic capacity to confront the ubiquitous challenge of missing data when analyzing PA studies using wearables. We illustrate with the analysis of a cohort of adolescent and young adult (AYA) cancer patients who wore commercial Fitbit devices for varying durations during the course of treatment.
\end{abstract}

\keywords{ Data missingness \and Physical activity tracking \and Bayesian state-space models}

\section{Introduction}
\label{s:intro}
Recent technological developments have made physical activity (PA) tracking devices broadly accessible, with increased adoption in studies where improved understanding of PA may inform the design of clinical interventions \citep{st2021use, strain2022considerations}. 
Analyzing PA data from wearable devices confronts several well know challenges, such as device reliability, data dimension, compatibility of consumer devices and, as is the focus of this work, data missingness \citep{balbim2021, hicks2019best, migueles2017accelerometer, karas2019accelerometry}.  
While missing data with PA wearables is ubiquitous, current literature often lacks formal consideration of missing data assumptions, with frequent and often implicit reliance on data being Missing at Random (MAR) or even Missing Completely at Random (MCAR) \citep{alhaddad2023longitudinal}. 
In reality, missing data mechanisms are likely related to underlying PA if, for example, a person's decision of whether to actually wear their activity tracking device is related to their intention to engage in certain types of activity. Formalizing assumptions about missing data mechanisms could guide PA modeling efforts that incorporate covariates relevant to the missing data and set the direction for strategies for missing data imputation.

We consider a framework in which raw sensor data have already been preprocessed into interpretable metrics such as steps and heart rate per unit time, as delivered by the commercial wearable devices in our motivating application.
Many models have been proposed for PA with this type of data, a very common choice being state space models with discrete states, most commonly in the form of a Hidden Markov Model (HMM) designed for activity type categorization and clustering.
Such models assume that the observed PA is the realization of an underlying classification of activity, such as low, medium, or high intensity activity, with model structure placed on how a person transitions between activity states \citep{langrock2013combining, de2020mixture, huang2018hidden}.
We formalize assumptions for missing data in the context discrete-space HMMs as well as general state space models, offering a framework to asses the role of routinely-available covariates in formalizing missing data assumptions.
In so doing, we isolate certain limits of HMMs in their ability to accommodate realistic missing data assumptions in PA studies.
In particular, we show how typical HMMs lack the ability to support the crucially important ignorability condition underlying the MAR assumption whenever missingness is expected to relate to underlying activity category. 
This has not, to our knowledge, been previously recognized or formulated in the HMM or PA literature. 

We hence extend classical HMMs to incorporate additional information to support realistic assumptions about ignorability of the missing data mechanism. 
Our framework can accommodate any covariates that relate to latent behavior and missingness, but we focus on indicators of time-of-day under the presumption that patients' routine activity patterns might dictate device wear and PA.
We leverage recent developments in the literature for Bayesian Nonhomogeneous HMMs (NHMMs) \citep{holsclaw2017bayesian} to include covariate information in the transition model through Pólya-Gamma data augmentation and adapt this model for PA tracking. We highlight and address practical implementation challenges that have not received pointed focus in previous work on Bayesian NHMMs, and relate these issues to the more general problem of Markov chain Monte Carlo with imbalanced categorical data \citep{johndrow2019mcmc}. 
Through simulations, we show how our proposed methodology outperforms state space models currently used in PA modeling with improved parameter estimation and missing data imputation when data can only be assumed MAR conditional on observed covariates impacting transitions among activity states.

We deploy the proposed methods in an analysis of PA among adolescent and young adult (AYA) cancer patients who used wrist worn devices manufactured by Fitbit Inc. in free living conditions to record  minute-level heart rate and step counts over periods ranging from weeks to months. 
Our results indicate that information coming from individual behavioural pattern and encoded in time-of-day covariates is indeed relevant, and that the proposed Bayesian NHMM procedure refines the results of an analysis - in terms of missing data imputation and clincal PA summaries - over what would be available from a more typical HMM with limited ability to leverage covariates relating to both PA and device wear. 

\section{Routine activity and wear patterns in a study of AYA cancer patients
}
\label{s:data}
PA data from wearable devices could be missing for a variety of reasons, but most notable in the context of this work is missing PA measures for times when a person does not wear their device, which invariably arises when data are recorded in free living conditions. Device non-wear could arise for a variety of reasons rangning from the simple need to charge the device to behavioral choices that are related underlying PA. For example, it is reasonable to suspect that some people are more likely to wear their device when exercising, which presents a clear threat of data being Missing Not at Random (MNAR). As we clarify in the subsequent, such circumstances might lead to biased inference and inaccurate conclusions with standard approaches to modeling PA. 

Evidence that missing data are MNAR is apparent in the analysis of AYA cancer patients, as evidenced by observing how patients' missingness and PA outcomes vary throughout the time of day. A visualization is given in Figure \ref{fig:heatmaps}, which depicts patterns of device wear and step count for four AYA cancer study participants.
Figure \ref{fig:nonwear} illustrates how the apparent choice to wear the device is related to the hour of the day, showing heterogeneity across patients.
Figure \ref{fig:steps_heatmap} shows how the hour of the day also influences step counts, a fact that has been shown to be relevant in previous works \citep{ren2022measuring, ren2023combining}.
Together, these figures underscore both the heterogeneity of patterns across individuals and the potential threat to common missing data assumptions generated by the interplay between device wear behavior and activity patterns.  
\begin{figure}[htbp]
    \centering
    \begin{subfigure}[b]{0.49\textwidth} 
        \includegraphics[width=\textwidth]{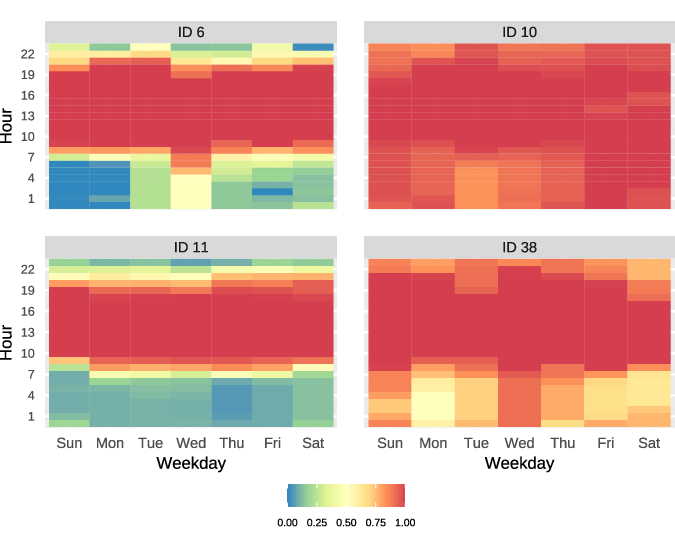}
        \caption{Wear time ratio.}
        \label{fig:nonwear}
    \end{subfigure}
    \hfill 
    \begin{subfigure}[b]{0.49\textwidth} 
        \includegraphics[width=\textwidth]{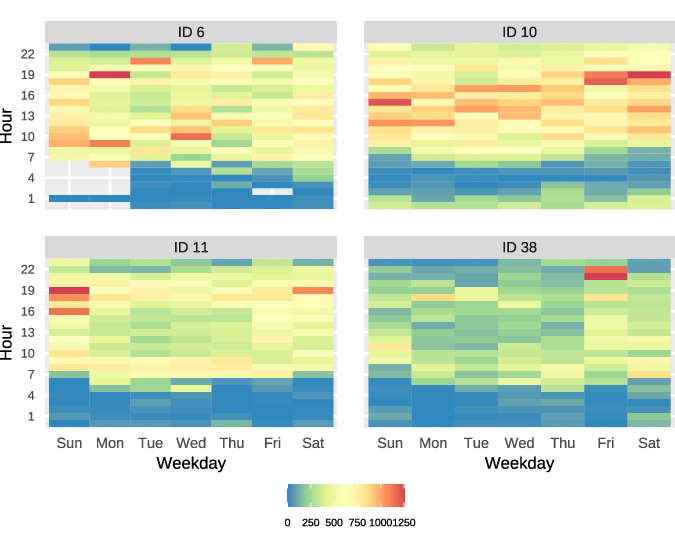}
        \caption{Steps.}
        \label{fig:steps_heatmap}
    \end{subfigure}
    \caption{Summary of individual daily routine patterns for four individuals: (a) proportion of time interval the device was worn across day of week (b) the average number of steps during each time interval of each day of week. Grey areas correspond to hours with no data during the study period.}
    \label{fig:heatmaps}
\end{figure}

\section{HMMs for modeling PA
}
\label{s:HMM}
Denote by $T$ the total number of observations considered in the dataset for a generic patient and by $y_t$ the vector of PA observations at time $t$, for $t=1, \dots, T$. 
Denote by $\ybt$ the matrix containing all such PA outcomes.
For example, in our AYA cancer study, $\ybt$ would represent a $T \times 2$ matrix where the $t^{th}$ row contains the heart rate and step count for each time interval. 
Finally, denote by $x_{1:T}$ the matrix of available exogenous covariates for this generic patient, and by $x_t$ a vector of covariates at a single time, $t$. 

State space models are a class of time series models characterized in the first place by the assumption that the observed outcome $y_t$ depends on an underlying and unknown time-indexed \textit{state}, which here we denote as $z_t$, with $\zbt$ the sequence of those underlying states at times $1:T$ plus an initial state, $z_0$. Each state $z_t$ for $t>0$ is associated with state-specific parameters that dictate the observed $y_t$. The dependence between the underlying state and the observed outcome is named \textit{emission distribution}, and typically $y_t$ does not depend on $y_{1:t-1}$ conditional on $z_t$. Relationships between states at different points in time are governed by a \textit{transition distribution}. A state space model can be expressed as:
\begin{equation}
\label{eq:lik}
    p(\ybt, \zbt) = p_{\pi}(z_0)   \prod_{t=1}^{T} p_q (z_{t} \mid \textbf{z}_{0: t-1}) p_{\psi}(y_t \mid z_t)
\end{equation}
\begin{equation}
    p(\ybt) = \int p_{\pi}(z_0)   \prod_{t=1}^{T} p_q (z_{t} \mid \textbf{z}_{0: t-1})p_{\psi}(y_t \mid z_t)  d \zbt
\end{equation}
where $\psi$ are the parameters governing the emission distribution, $\pi$ governs the probability of an initial state, $z_0$, and $q$ are the parameters governing the transition distribution. Different specifications of emission and transition distributions correspond to different state space models that invite different estimation procedures, some developed within a Bayesian framework. 
For example, a Gaussian dynamic linear model is characterized by Gaussian $p_q(\cdot)$ and $p_\psi(\cdot)$ over a continuous state space, supporting straightforward posterior estimation strategy via the Kalman Filter. 
PA tracking data are more commonly modeled via discrete state space models having a finite set of states, each corresponding to a categorization of activity ranging from sedentary to intense activity.
Within this subclass of state space models, Hidden Markov Models (HMMs) are the most common \citep{langrock2013combining, witowski2014using,de2020mixture}.
In an HMM, the states $z_t$ evolve with a Markov property that at each time $t$, the current state depends on the previous states $\textbf{z}_{0:t-1}$ through $z_{t-1}$ only, which can correspond to the inherently smooth nature of PA variation across levels of activity. With this property, the transition distribution of an HMM takes the form:
\begin{equation}
    p_q(z_t = j \mid z_{t-1} = i, \textbf{z}_{0:t-1}) = p_q(z_t = j \mid \boldsymbol{z}_{t-1} = i) = q_{ij} \label{eq:HMM2}
\end{equation}
with $q_{ij}$ the entries of a transition matrix $Q$ corresponding to the probability of transitioning from $z_{t-1} = i$ to $z_{t} = j$.  
Note that covariates do not play a role in the most common expression of HMMs.
While covariates could fairly easily be incorporated into the emissions distributions, including them in the Markov chain of underlying states is not straightforward \citep{zucchini2009hidden, altman2007mixed}, especially in the Bayesian context.

\section{Formalizing Missing Data Assumptions for State Space Models}
\label{s:ign}
We formalize missing data assumptions for a broader class of state space models before resuming focus on the HMMs for PA in Section \ref{s:HMM}. Formalizing missing data assumptions requires augmenting the state space model with additional consideration of the {\it missing data model} describing the mechanism that governs which measures are actually observed (in such contexts, the state space model might be referred to as the \textit{response model}).  
Augmenting the state space model with the missing data model permits explicit formulation of ignorability, which refers to a set of conditions that dictate whether the missing data model can be ignored for making inference about the state space model. Key among the conditions for ignorability is whether the data can be assumed ``missing at random'' (MAR), an assumption that, in practice, typically relies on the presence of covariates that relate to why a particular value is missing.
Detailed descriptions of ignorable missing data mechanisms appear in \cite{rubin1976inference} and \cite{daniels2008missing}, but have received little formal consideration in state space models. 
\cite{vidotto2020multiple} mention information that should intuitively be included in a missing data imputation strategy, without providing formal justification. 
\cite{speekenbrink2021ignorable} present methods that do not pursue ignorability and posit models for the missing data mechanism in an HMM.  
We focus on the case where covariates presumed to relate to missingness are available, and formalize ignorability conditions in the context of state space models for PA data collected via wearable devices. 
For this type of data, it is important to note that the missing data mechanism could derive from features of the device measurement (e.g., the device has a limit of detection) or features of the person's behavior (e.g., a person decides to wear or not wear the device).  
While we consider both, we regard the latter as the most salient threat to satisfying the ignorability conditions.  

\subsection{Ignorability conditions for state space models}
\label{s:ign2} 
Let $\tb$ denote the vector of unknown parameters indexing a state space model, i.e. $\tb = (\pi, \psi, q)$ in expression (\ref{eq:lik}).
Denote by $\pp$ the parameters governing the missing data model determining whether a particular $y_t$ is observed, with $\mbt^y$ the corresponding vector of binary entries indicating the values of $t$ for which $y_t$ is missing, where we assume for simplicity that when $y_t$ is a vector, all elements are either observed or missing (corresponding to device wear/nonwear).
The data vector $\ybt$ is partitioned into the set of missing values, $\ybm : = \{y_t; m_t^y = 1\}$, and the set of observed values,  $\ybo : = \{y_t; m_t^y = 0\}$.  Also consider $x_{1:T}$ to be a vector of covariates that may be part of either the state space model or the missing data mechanism.
The full likelihood comprised of both the state space model and the missing data model is:
\begin{equation}
    L \left(\boldsymbol{\theta}, \boldsymbol{\phi} \mid \ybo, \mbt^y,  x_{1:T} \right) \\ 
    = \int \int  p\left(\mbt^y,  \zbt, \ybo, \ybm \mid \boldsymbol{\theta}, \boldsymbol{\phi}, x_{1:T}\right) d \ybm d \zbt 
\end{equation}
The definition of ignorability in this context can be stated as follows:
\begin{definition}
\label{def}
For a state space model with outcome $\ybt$, available observations $\ybo$, missing observations $\ybm$, latent states $\zbt$, state space model parameters $\boldsymbol{\theta}$, parameters governing the missing mechanism $\pp$, and  other observables $x_{1:T}$, a missing data mechanism is said to be ignorable for the purposes of posterior inference if:
\begin{enumerate}
    \item \label{cond1} The full data parameter $(\boldsymbol{\theta}, \pp)$ can be decomposed so that $\tb$ indexes the full-data likelihood $p(\ybt \mid \boldsymbol{\theta})$ and $\pp$ indexes the missing data mechanism, $p( \mbt^y \mid \boldsymbol{\phi})$;
    \item \label{cond2} The parameters $\boldsymbol{\theta}$ and $\pp$ are independent a priori, i.e.
    $
p(\boldsymbol{\theta},\boldsymbol{\phi}) = p(\boldsymbol{\theta})p(\boldsymbol{\phi})
    $;
    \item \label{cond:3} $\mbt^y \perp (\ybm, \zbt) \mid x_{1:T}, \ybo $, which can be written as
    \begin{enumerate}[label=\arabic{enumi}\alph*)]
        \item \label{cond:a} $\mbt^y \perp \ybm \mid \zbt, \ybo, x_{1:T}$ 
    \item \label{cond:b} $\mbt^y \perp \zbt \mid x_{1:T}, \ybo$
    \end{enumerate}
\end{enumerate} 

\end{definition}
Conditions \ref{cond1} and \ref{cond2} are standard technical conditions, with the crucial condition for state space models being the independence of the missing data indicator and {\it both} $\ybm$ and the latent $\zbt$ in condition \ref{cond:3}. 
This is the analog of a standard assumption that the missing data are MAR (conidtional on $x_{1:T}$), but extended here to encode the inherent dependence on the latent variable of the state space model. 
We decompose this relationship into Conditions \ref{cond:a} and \ref{cond:b} to permit an assessment of different types of threats to ignorability assumption for state space models for PA.  

To clarify the role of the conditions in Definition \ref{def}, we start with the observed data posterior for all unknown parameters of the state space and missing data model:
\begin{align*}
&p(\tb, \pp \mid \ybo, \mbt^y,  x_{1:T}) \propto p(\tb, \pp) L \left(\boldsymbol{\theta}, \boldsymbol{\phi} \mid \ybo, \mbt^y , x_{1:T}\right). \\
\intertext{Using condition \ref{cond2} and clarifying integration over unobserved quantities:}
=& p(\tb) p(\pp) \int  \int  p\left(\mbt^y,\zbt,\ybo, \ybm \mid \boldsymbol{\theta}, \boldsymbol{\phi},  x_{1:T} \right) d \ybm d \zbt \\
\intertext{and adding condition \ref{cond1} 
}
&= p(\tb) p(\pp) \int \int p\left(\mbt^y \mid \ybo, \ybm, \zbt, \boldsymbol{\phi}, x_{1:T}\right) \\
& \hspace{1cm} \times p\left(\ybo, \ybm, \zbt \mid \boldsymbol{\theta},  x_{1:T}\right)  d \ybm d \zbt \\
\intertext{adding condition \ref{cond:a}}
&\propto p(\tb) p(\pp) \int 
 p\left(\mbt^y \mid \ybo, \zbt, \boldsymbol{\phi}, x_{1:T} \right) d \zbt   \\
& \hspace{1cm} \times \int\int p\left(\ybo, \ybm, \zbt \mid \boldsymbol{\theta},  x_{1:T} \right) d \ybm d \zbt  \\
\intertext{and \ref{cond:b}}
&= p(\tb) p(\pp) p\left(\mbt^y \mid \ybo, \pp, x_{1:T} \right) \int \int  p\left(\ybo, \ybm, \zbt \mid \boldsymbol{\theta},  x \right) d \ybm d \zbt \\
&= p(\pp) p\left(\mbt^y \mid \ybo, \pp, x_{1:T} \right) \times p(\tb) L \left(\boldsymbol{\theta} \mid \ybo,  x_{1:T} \right). 
\end{align*}
so that the posterior factors over the missing data model and the state space model.
The crucial result here is that, in order to satisfy ignorability, it is not sufficient for the \textit{missing mechanism to be independent of the unobserved emissions}, but it also needs to be \textit{independent of the unobserved latent states}. 

Note that conditioning on the latent $\zbt$ in Condition \ref{cond:a} is reminiscent of  ``Auxiliary-MAR'' formulated in \cite{daniels2008missing}, the key difference here being that $\zbt$ is entirely latent, necessitating further assumptions about its relationship with $\mbt$ in Condition \ref{cond:b}. 
Finally, note that, owing to the completely unobserved nature of $\zbt$, an alternative formulation of ignorability could treat $\zbt$ as an additional parameter considered as part of $\tb$, $\tb(\zbt)$. 
Condition \ref{cond:b} in that case would have been implied by an extended version of condition \ref{cond1}, which would have required $p(\mbt^y \mid \pp, \tb(\zbt)) = p(\mbt^y \mid \pp)$.
We have intentionally formulated the ignorability as in Definition 1 in order to support more interpretable assessment of the required missing data assumptions in practice.  
Finally, note that the ignorability condition was formulated to support posterior inference for the parameters of the state space model, but an immediate consequence is the ability to impute missing values for $\ybm$ directly from their posterior predictive distribution. 
A more formal statement about this aspect of the model is given in Web Appendix A.

Figure \ref{fig:dag} outlines Condition \ref{cond:3} in terms of directed acyclic graphs (DAGs) depicting relationships among covariates, latent state indicators, missing data indicators, and outcomes that might be modeled with a state space model. 
Black arrows correspond to relationships that can be specified in the state space model and red arrows indicate relationships that are considered in Conditions \ref{cond:a} and \ref{cond:b}.
For example, in the motivating PA study, latent activity is informed by covariates ($x_{1:T} \rightarrow \zbt$) and dictates PA outcomes ($\zbt \rightarrow \ybm$), and whether a PA value is observed can generally depend on covariates ($x_{1:T} \rightarrow \mbt$), the latent activity state ($\zbt \rightarrow \mbt$), and the realized PA value ($\ybm \rightarrow \mbt)$. 
In some contexts, covariates may have additional relationships with emissions ($x_{1:T} \rightarrow \ybm$).  
For the purposes of illustration, we have omitted additional possible dependencies on $\ybo$.
The threat of nonignorable missing data derives from possible dependencies among latent state, the missing data indicator, and outcomes, which would manifest in the DAG as any $\mbt - \ybm$ association.

\begin{figure}[h!]
    \centering
    \begin{tikzpicture}
    
    \draw[thick] (0,0) rectangle (18,5.2); 

    \draw[thick] (0.5,0.5) rectangle (3.5,4.5); 
    \node at (2,4.8) {\textbf{I}};
    \node at (2,4) {$x_{1:T}$};
    \node at (1,2.3) {$\zbt$};
    \node at (3,2.3) {$\mbt^y$};
    \node at (2,0.8) {$\ybm$};
    \draw[thin] (1.7,3.7) rectangle (2.3,4.3); 
    \draw[->] (2,3.8) -- (1.1,2.5); 
    \draw[->] (2,3.8) -- (2,1.1); 
    \draw[->] (2,3.8) -- (2.8,2.5); 
    \draw[->] (1.1,2.1) -- (1.85,1.1); 
    \draw[->, red, thick] (1.5,2.25) -- (2.5,2.25); 
    \draw[->, red, thick] (2.1,1.1) -- (2.8,2.1); 
    
    \draw[thick] (4,0.5) rectangle (7,4.5); 
    \node at (5.5,4.8) {\textbf{II}};
    \node at (5.5,4) {$x_{1:T}$};
    \node at (4.5,2.3) {$\zbt$};
    \node at (6.5,2.3) {$\mbt^y$};
    \node at (5.5,0.8) {$\ybm$};
    \draw[thin] (5.2,3.7) rectangle (5.8,4.3); 
    \draw[->] (5.5,3.8) -- (4.6,2.5); 
    \draw[->] (5.5,3.8) -- (6.3,2.5); 
    \draw[->] (4.6,2.1) -- (5.35,1.1); 
    \draw[->] (5.5,3.8) -- (5.5,1.1); 
    \draw[->, red, thick] (5,2.25) -- (6,2.25); 

    \draw[thick] (7.5,0.5) rectangle (10.5,4.5); 
    \node at (9,4.8) {\textbf{III}};
    \node at (9,4) {$x_{1:T}$};
    \node at (8,2.3) {$\zbt$};
    \node at (10,2.3) {$\mbt^y$};
    \node at (9,0.8) {$\ybm$};
    \draw[thin] (8.7,3.7) rectangle (9.3,4.3); 
    \draw[->] (9,3.8) -- (8.1,2.5); 
    \draw[->] (9,3.8) -- (9.8,2.5); 
    \draw[->] (8.1,2.1) -- (8.85,1.1); 
    \draw[->] (9,3.8) -- (9,1.1); 

    \draw[thick] (11,0.5) rectangle (14,4.5); 
    \node at (12.5,4.8) {\textbf{IV}};
    \node at (12.5,4) {$x_{1:T}$};
    \node at (11.5,2.3) {$\zbt$};
    \node at (13.5,2.3) {$\mbt^y$};
    \node at (12.5,0.8) {$\ybm$};
    \draw[thin] (12.2,3.7) rectangle (12.8,4.3); 
    \draw[->] (12.5,3.8) -- (11.6,2.5); 
    \draw[->] (12.5,3.8) -- (13.3,2.5); 
    \draw[->] (11.6,2.1) -- (12.35,1.1); 

    \draw[thick] (14.5,0.5) rectangle (17.5,4.5); 
    \node at (16,4.8) {\textbf{V}};
    \node at (15,2.3) {$\zbt$};
    \node at (17,2.3) {$\mbt^y$};
    \node at (16,0.8) {$\ybm$};
    \draw[->] (15.1,2.1) -- (15.85,1.1); 
    \draw[->, red, dashed, thick] (15.5,2.25) -- (16.5,2.25); 

    \end{tikzpicture}
    \caption{Different scenarios regarding the relationship between $\mbt$, $\ybm$, $x_{1:T}$, $\zbt$.}
    \label{fig:dag}
\end{figure}

In the fully connected DAG in I, ignorability is clearly violated due to the direct dependence between $\mbt \rightarrow \ybm$. 
The DAG in II adopts Condition \ref{cond:a}, removing the arrow from $\ybm$ to $\mbt$.  
In a PA study, the presence of $\ybm \rightarrow \mbt$ would indicate that whether a value of $\ybt$ is measured is related to its value, independently of the latent activity state. 
An example might be device malfunction where, for example, a heart rate measure is never stored whenever the BPM has zero as second digit. 
We regard such instances as purely technical and do not consider violations of Condition \ref{cond:a} a major threat in studies of PA. 
The key point is that, even when Condition \ref{cond:a} holds as in II,  the threat to ignorability remains through the $\ybm \leftarrow \zbt \rightarrow \mbt$ path, since $\zbt$ is entirely latent.  
Thus, a latent dependence between $\ybm$ and $\mbt$ persists.   
The DAG in III retains Condition \ref{cond:a} and adds condition \ref{cond:b} through deletion of the arrow from $\zbt \rightarrow \mbt^y$. 
In a PA study, the $\zbt \rightarrow \mbt$ would indicate that whether a value of $\ybt$ is measured is related to the latent activity state.  An example might be that a person is more likely to wear their device when exercising.  We regard this type of behavior-related missingness as the salient threat in wearable device PA studies.  
In this case, the conditioning on the observed covariates $x_{1:T}$ blocks the only remaining backdoor path between $\ybm$ and $\mbt^y$, precluding any remaining association between $\ybm$ and $\mbt$ and satisfying the ignorability conditions.  This underscores the potential importance of being able to express a dependence between $x_{1:T}$ and $\zbt$ when specifying state space models.



\subsection{Implications for specifying state space models and limits of the HMM}
The need to explicitly condition on $x_{1:T}$ to satisfy Condition \ref{cond:3} can manifest differently in different state space model specifications.  
For example, in a Gaussian dynamic model such as the one in \cite{cai2022state} deployed to model human behavior, the latent states are not intended to hold any interpretation and covariates relevant to missing data can be included in the emission distribution.  Satisfaction of Definition \ref{def} follows because conditioning on $x_{1:T}$ in the emission distribution can satisfy Condition \ref{cond:a}, with Condition \ref{cond:b} holding without the need to condition on $x_{1:T}$. 
In contrast, settings where the underlying process of interest is conceived as transitions among discrete and interpretable latent states, an HMM may be appropriate. For an HMM, additional dependence between covariates and $y_t|z_t$ in the emission distribution would imply a relationship that was unrelated to the latent state.  In a study of PA where the latent states are activity classifications, this may relate to device malfunction, but could not relate to PA without distorting the meaning of $\zbt$. As these are not the most salient threat in the study of PA, we omit this type of dependence in DAG (IV) of Figure \ref{fig:dag}, which removes the direct arrow from $x_{1:T} \rightarrow \ybm$.  Thus, the settings when an HMM is appropriate for modeling PA are among the settings where, if covariates are thought to relate to PA and missingness, Condition \ref{cond:b} requires conditioning on $x_{1:T}$ when modeling the latent states.  DAG V in Figure \ref{fig:dag} depicts the implications of a failure to encode a $x_{1:T} \rightarrow \zbt$ relationship, yielding a residual association between $\zbt$ and $\mbt$, thus violating the ignorability conditions.
 


\subsection{Isolating deficiencies of HMMs with missing data}
For an HMM as specified in Section \ref{s:HMM} with emission distribution $p_{\psi}(\yb_t \mid z_t)$ and a transition distribution as in Equation (\ref{eq:HMM2}), Condition \ref{cond:a} implies that $p(\mbt \mid \ybm , \zbt) = p(\mbt \mid \zbt)$, and
\begin{equation}
\label{eq:LfullHMM}
L ( \boldsymbol{\theta}, \boldsymbol{\phi} \mid \ybo, \mbt^y,  x_{1:T})
=  \int \prod_{t=0}^{T} p_{\pp}(m_t \mid z_t) \int p_{\pi}(z_0) \prod_{t=1}^{T} q_{z_{t-1},z_{t}} p_{\psi}(y_t \mid \theta_{z_t})d \ybm   d \zbt 
\end{equation}
with the implied relationship between $\zbt$ and $\mbt^y$
not allowing a partition between the PA model and the missing data model in the posterior.
To satisfy Condition \ref{cond:b}, $x_t$ could be included in the model for the transition distribution by augmenting Equation (\ref{eq:LfullHMM}) to become : 
\begin{align*}
&L ( \boldsymbol{\theta}, \boldsymbol{\phi} \mid \ybo, \mbt^y,  x_{1:T}) \\
& = \prod_{t=0}^{T} p_{\pp}(m_t \mid x_t) \int  \int p_{\pi}(z_0) \prod_{t=1}^{T} q_{z_{t-1},z_{t}}(x_t) p_{\psi}(y_t \mid z_t) d \zbt d \ybm \\
& =  p(\mbt \mid x_{1:T}, \pp) \times  L \left(\boldsymbol{\theta} \mid \ybo ,  x_{1:T} \right).
\end{align*}
which is an important deviation from the HMM expressed in Section \ref{s:HMM}. Thus, classical HMMs
lack the capacity to properly incorporate covariates $x_t$, into the transition distribution, presenting an inherent limitation for including covaraites covariates required to satisfy ignorability Condition \ref{cond:b}.  In the context of a PA study, this translates to an inability to account for features such as the hour of the day to account for daily routine, as has been considered in previous PA studies \citep{langrock2013combining, st2021use, witowski2014using}).


\section{Nonhomogeneous HMMs for modeling PA}
\label{s:NHMM}

\label{HMM:num2}
Extending HMMs to include covariates in the transition distribution corresponds to a Non Homogeneous NHMM, where transition probabilities between states depend on the level(s) of covariate(s).  Extensions to NHMMs mainly use multinomial regression to model transition probabilities, but the computational burden of such models has historically limited their application to small numbers of observations \citep{altman2007mixed}, settings where the underlying states are not of interest \citep{lu2023bayesian}, or otherwise avoided Bayesian inference \citep{maruotti2012mixed, huang2018hidden}.
Bayesian implementations in particular have always been considered exceptionally challenging due to the lack of closed form for posterior distributions, but recent solutions have recently emerged based on the Pólya-Gamma data augmentation strategy \citep{polson2013bayesian} to sample from the conditional posterior of parameters in the transition distribution \citep{holsclaw2017bayesian, wang2023bayesian}.
We propose  a model that makes use of such sampling strategies and accommodates the aforementioned ignorability implications of the HMM in the context of PA tracking data with missing observations.

\subsection{NHMM extension: inference via Pólya-Gamma data augmentation and model specification}
\label{NHMM}
We extend the previous expression of the HMM to the following NHMM likelihood:
\begin{equation}
\label{eq:likNHMM}
    p(\ybt, \zbt) = p_{\pi}(z_0) \prod_{t=1}^{T} q_{z_{t-1}, z_{t}}(x_t) N(y_t \mid z_t, \psi),
\end{equation}
with the crucial extension being the expression for the transition probabilities, $p_q(z_t \mid z_{t-1}, x_t) = q_{z_{t-1}, z_t}(x_t)$.
Denote by $X$ the $T \times p$ design matrix containing covariates, and by $x_t$ its $t^{th}$ row.
We define\begin{equation}\label{eq:entries}
  q_{ij}(x_t) = P\left(z_t=j \mid z_{t-1}=i,  x_t \right)=  \frac{\exp \left(\xi_{i j}+x_t^{\prime} \boldsymbol{\beta}_j\right)}{\sum_{m=1}^K \exp \left( \xi_{im}+  x_t^{\prime} \boldsymbol{\beta}_m\right)}.
 \end{equation}
Note that such specification for $q_{ij}$ will not only allow us to flexibly model $Q$, but it will allow for a model that permits inference on the parameters in (\ref{eq:entries}), $\boldsymbol{\xi}_{ij}$ and $\bb_j$, which might be of interest  as descriptors of how people transition between latent PA categories according to the relevant covariates. The parameter $\bb$ in (\ref{eq:entries}) relates the covariate $x'_t$ to the probability of transitioning {\it to} state $j$.
While the model could in principle accommodate any covariate of interest and patients characteristics, we specify $x'_t$ to encode information about clock time, captured by 23 dummy variables corresponding to the 24 hours of the day, to capture patients' routine activity patterns. 
Note the lack of dependence of this relationship on the current state, $i$, implying that the hour of the day impacts the probability of transitioning into an activity state in a manner that does not depend on the originating state. The role of the incoming state is captured by $\xi_{ij}$, which serves a role akin to an $i$-indexed intercept. 
This is a distinction from other similar models and Bayesian posterior sampling strategies deployed in other settings \citep{wang2023bayesian} and reflects our interest in how daily routine impacts activity patterns.

A posterior sampling scheme for $\zz_j = (\xx_j, \bb_j)$ using the Pólya-Gamma augmentation strategy \citep{polson2013bayesian, holsclaw2017bayesian}, is described in Web Appendix B.
Inference is performed by sampling from the full-data response model, achieved through a data augmentation step in which the missing data are also sampled \citep{daniels2008missing}. 
It is worth noting that, with a model satisfying ignorability conditions, the factorial nature and dependence structure of the likelihood would have enabled the observed-data likelihood to be obtained without the need to approximation through missing data imputation \citep{zucchini2009hidden, daniels2008missing}. 
However, the data augmentation step proves particularly useful given our interest in performing data imputation.
Details of the Gibbs sampling steps are provided in Web Appendix C.

We set the number of states $K$ equal to 3, corresponding to a scenario in which individuals could be in a sedentary, intermediate, or high activity state.
Denoting by $p$ the number of covariates included in the design matrix $X$, all the parameters that enter in the logistic link in (\ref{eq:entries}) are collected in a $K \times (K+p)$ matrix, so that each $k$ row corresponds to the parameters that are directly responsible of the probability of going in the $k$-th state, i.e. $\xi_{ik}$ and $\beta_k$ as in (\ref{eq:entries}). 
The $K$-th row of this matrix is set to zero to preserve identifiability (see \cite{agresti2012categorical}), so that the remaining parameters should be interpreted in terms of the probability of transitioning into state $j$ relative to transitioning into the  \textit{basilene state}. 

For implementation, an NHMM is specified and learned separately for each individual, which can accommodate individuals with very different PA habits.  Prior distributions for the emission distribution parameters are specified to be non-informative.
Informative priors are specified for $\zz$ in order to facilitate MCMC convergence, as outlined in Section \ref{app:NHMM3}.

\subsection{Practical Considerations}
\label{app:NHMM3}
As with many state-space models, posterior inference for NHMMs presents nontrivial considerations for model identifiability and MCMC convergence. We identify the notion of {\it separation} as a key determinant of complications, where separation in our PA case refers to certain hours of the day where a person never exhibits transitions to or from a certain latent state. This represents a specific instance of the more general documented problem of poor MCMC mixing when Pólya-Gamma data augmentation schemes are applied to categorical data with imbalanced categories
\citep{johndrow2019mcmc}.
Such situation can easily arise in PA studies, particularly those that include overnight hours where certain activity levels are rare.
Intuitively, this will create a lack of information for learning the parameters that dictate the transition probabilities, i.e. $\zz$.
Problems with MCMC convergence and sample autocorrelation are exacerbated in the presence of missing data. 
Identifiability and MCMC convergence issues for parameters analogous to $\zz$ receive limited attention in previous work in Bayesian NHMMs such as \cite{holsclaw2017bayesian} and \cite{wang2023bayesian}.

We identify several practical considerations for effective MCMC to approximate the posterior distribution of model parameters in (\ref{eq:likNHMM}) - (\ref{eq:entries}) in the PA context.
The first is strategic choice of the hour of the day that will serve as the baseline category, which has most direct bearing on mixing for the intercept parameters of the transition probabilities, $\xi_{ij}$.
Choosing an hour unlikely to contain transitions between states (e.g., 12:00am) as the baseline hour can severely challenge MCMC mixing of $\boldsymbol{\xi}$ and impact the estimation of $\boldsymbol{\beta}$ at all hours.
Choosing a baseline hour more likely to show transitions between categories (i.e., with lower threat of separation) will limit such convergence issues.
The second strategy we employ to improve MCMC performance is regularization through informative priors on $\zz$. 
In particular, we set the prior distribution to $\boldsymbol{\zeta}_j \sim N_{K+p}(\boldsymbol{\zeta}_0, I \cdot (1/10))$. 
Finally, we consider an additional regularization strategy based on a data augmentation prior \citep{greenland2001data} for the marginal probability of membership in each state, $\sum_{i}q_{ij}(t)$ for $j=1,2,3$.  
This prior is specified by augmenting the observed data with synthetic pseudo-data representing $m \times K$ days of specified values of $(y_t, z_t)$, with $z$ chosen to entail equal representation for states $1,..., K$ at every $t$ and $y$ values set according to values of $\psi$ learned after a number of exploratory (and subsequently discarded) MCMC iterations. 
$m$ determines the strength of the prior, and we found in our analysis AYA cancer patients that  
specifying $m = 1$ was typically effective at stabilizing estimates without strongly influencing the posterior for most patients for whom MCMC convergence was a pressing consideration. In general, the data augmentation prior was most useful for patients with high missing percentages of missing data. Web Appendix D provides additional details on the implementation of this strategy.

\section{Simulation Study}
\label{s:app} 
To compare the HMM and NHMMs for accommodating missing data with wearable device PA data, we conducted a realistic simulation study based on observed data from selected patients from the AYA cancer cohort, with artificial missingness simulated according to different scenarios. 
We identify five patients from the AYA cohort showing a negligible percentage of missing PA measures. We simulate $\mbt$ according to a $Ber(\boldsymbol{p}_\ell)$ distribution for $\ell = 1, ..., 24$ hours of the day, where each $t=1, ...,T$ is associated to a corresponding $\ell$ and the probability of missingness depends on the hour of the day via: $
\boldsymbol{p}_{\ell} = (1-\gamma) \cdot \boldsymbol{p}_{0,\ell} + \gamma \cdot \boldsymbol{p}_{h,\ell}
$. The vector $\boldsymbol{p}_0 = \boldsymbol{p}_{0,1}, ..., \boldsymbol{p}_{0,24}$ is constant across the hours of the day, while the vector $\boldsymbol{p}_h = \boldsymbol{p}_{h,24}, ..., \boldsymbol{p}_{h,24}$ entails values that are different across different hours of the day.  Hence, specifying $\gamma=0$ corresponds to the case of MCAR, whereas values of $0<\gamma\le1$ simulate varying degrees of $x \to \mbt$ dependence to simulate that missigness depends on the hour of the day.  

Figure \ref{fig:combined} compares results from the HMM and the proposed NHMM across different missing data scenarios for two individuals, here identified as ID 8 and ID 16.  The Figure depicts the median measure of outperformance - or underperformance - of the NHMM relative to the HMM.
The value of $\gamma$ used for each simulation setting is reported on the $x$-axis; higher values of $\gamma$ correspond to larger deviations from data being MCAR. the higher the role of the hour of the day in the imposed missing pattern.
Two missing percentages were used, reflecting two different degrees of missingness that were observed in the AYA cohort.

The results reported in the figures show that the two models perform similarly when there is little or no dependence on $x$, which corresponds to the case where a MCAR assumption (approximately) holds. The NHMM clearly performs better as the dependence on $x$ increases, with more marked improvements when the overall percentage of missing observations is higher.
Note that the simulation controls the relationship between $x \to \mbt$, but does not control the relationship between $x \to \zbt$, which would correspond to the AYA individual's actual daily PA routine. As shown in Web Appendix E, individuals ID 8 and ID 16 show evidence of different routine patterns,indicating that the result apparent from Figure \ref{fig:combined} is not due to an exceptionally high $x \to \zbt$ relationship specific to the selected individuals.
The remaining individuals used in this hybrid simulated dataset showed very comparable results, with a visualization of their combined output also provided in the Web Appendix E. 
\begin{figure}[htbp]
\centering
\begin{tabular}{c|c|c}
   & \textbf{ID8} & \textbf{ID16} \\ \hline
   \rotatebox{90}{\textbf{Medium missing \%}} &
   \includegraphics[width=0.38\textwidth]{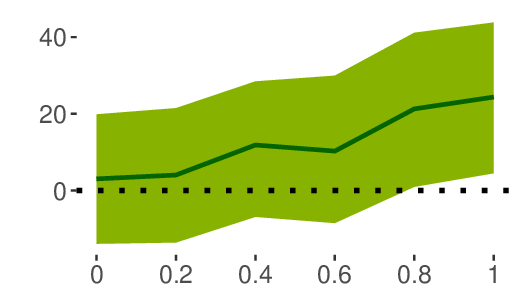} &
   \includegraphics[width=0.38\textwidth]{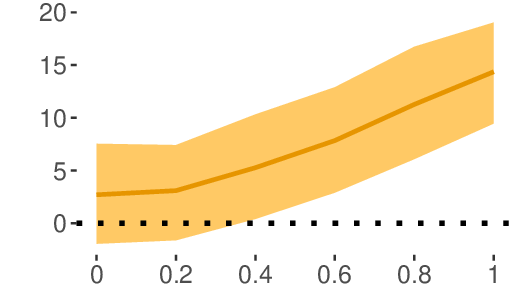} \\[0.5cm]
   \rotatebox{90}{\textbf{High missing \%}} &
   \includegraphics[width=0.38\textwidth]{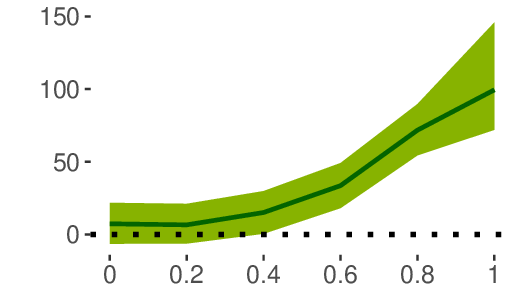} &
   \includegraphics[width=0.38\textwidth]{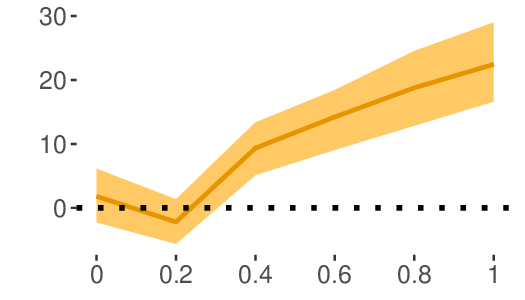} \\
\end{tabular}

\caption{Comparing HMM and NHMM under different simulated scenarios captured by $\gamma$ and different individuals, using RMSE for predictions of artificially-induced missing data. The y-axis represents the difference in performance between the two models; the higher the value, the better the performance of the NHMM relative to the HMM. The x-axis values are the values of $\gamma$ used to induce the missingness pattern. Lines are posterior medians and bands are 90\% credible intervals.}
\label{fig:combined}
\end{figure}

Web Appendix F contains a more controlled simulation study that dictates both the $x_{1:T} \to \mbt$ and $x_{1:T} \to \zbt$ relationships with entirely generated data. 
In those simulations the NHMM outperforms the HMM in terms of missing data imputation and parameter estimation across stronger simultaneous relationships $x_{1:T} \to \mbt$ and $x_{1:T} \to \zbt$.

\section{Analysis of AYA Cancer Patients}
\subsection{AYA Cancer Patients Dataset}
\label{app:num0}
PA measures were collected by the MD Anderson Cancer Center as part of a broader study that provided Fitbit Charge 3 devices to a cohort of AYAs who were between 15 and 19 years of age and undergoing cancer treatment. Enrolled patients agreed to participate in a series of studies whose primary objective was to identify biomarkers of cardiac impairment associated with receipt of cardiotoxic chemotherapy. 
Additional objectives included understanding the evolution of PA in individuals undergoing such treatments and to record their activity for a span of time up to three years. Fitbit collects minute level data for heart rate and steps.
We will use those categories only for comparison, since our methods are designed for a more general class of devices, which may not provide such classification by default.  
What's more, Fitbit still delivers a value of 1 (sedentary) for the latent activity category whenever there is an \texttt{NA} in steps or heart rate, hence in observations that should instead be categorised as missing.

We use heart rate and steps as $\ybt$ and apply the following pre-processing to 58 patients available for analysis. First, we apply an established algorithm \citep{choi2011validation} designed to ensure that data are classified as missing when the device is not worn and to avoid anomalous measures in case in which the device is not worn but still recording data.
Then, we aggregate data to 15 minute intervals for dimension reduction purposes, so that each observation $\textbf{y}_t$ corresponds to a 15 minute period.
Each 15 minute window with at least 10 minutes of missing observations is labeled as missing, i.e. $m_{t} =1$. 
For the remaining 15 minute windows, $m_{t} = 0$ and $\textbf{y}_t$ is set as the sum of the observed PA outcomes in the interval, possibly rescaled to account for any missing minutes (in this case less than 10) in such window.

We then discard all $t$ falling in days that contain less than 5 hours of observations between 8AM and 8PM, which we regard as days when the patient is not monitoring PA and for which there is not enough information in a 24 hour period to reliably recover activity patterns. Overall, the goal is to discard data for days during which individuals are not actively participating in the PA study while retaining as many participating days as possible to learn physical activity patterns through imputation of missing observations. We retain only the patients that have at least 30 days of monitoring. Note indeed that we still consider nighttime hours as part of our study, accommodating for example the need to estimate quality of sleep or individual nighttime routine. 

The final analysis dataset consists of 36 individuals with an average length $T$ of 121578 15 minute intervals, corresponding to a mean of roughly 127 days of PA monitoring. The percentage of missingness spans a minimum of 0.01 to a maximum of 0.42.  Each patient is analyzed separately with the HMM and proposed NHMM as outlined in Sections \ref{s:HMM} and \ref{s:NHMM} with $K-3$ latent activity states corresponding to sedentary, moderate, or high physical activity.  
 
\subsection{Results}
\label{app:num4}
The models are used to estimate parameters of the NHMM, impute missing PA observations, and calculate interpretable PA summaries. We show in particular how such summaries, which are commonly used to summarize patient PA in contexts such as the AYA study, can differ depending on whether the underlying modl is the HMM, the NHMM, or even just the raw Fitbit output is used. To illustrate, we use the estimated marginal probability of being in an activity state depending on the hour of the day, the average daily total step count, the average HR per wear minute, the probability of being in the sedentary state and the average length of sedentary bouts.
Results were obtained by running the MCMC chains for 20000 iterations with 15000 sweeps as burn-in. 
Convergence of the MCMC chains was assessed by visual inspection of traces for all the parameters in the model, considered separately for each patient. With only the carefully chosen baseline hour and regularization prior on $\boldsymbol{\zeta}_j$ specified in Section \ref{app:NHMM3}, only 9 patients exhibited adequate MCMC convergence. 
For the remaining patients, we specified the data augmentation prior on the marginal probability of state membership as also explained in \ref{app:NHMM3}. Values of $m=1$ or $m=2$ in the data augmentation prior were sufficient to achieve adequate MCMC performance for 20 patients. 
7 patients required larger values of $m$ to stabilize MCMC performance, with $m=7$ the largest value used.  
A sample of MCMC trace plots is shown in Web Appendix G, together with the corresponding traces obtained without the data augmentation prior for comparison. 
 
Figure \ref{fig:marginals} summarizes the comparison of NHMM and HMM estimated marginal probabilities of being in each of the states for each of the 24 hours of a day for a single patient. As one can easily see, the HMM indicates similar probabilities of activity state membership during the daytime and nighttime hours. 
In contrast, the NHMM, which models transitions among PA states according to the hour of the day, more realistically estimates that the probabilities of membership in the activity states evolves across the morning, afternoon, and night. 
Estimating different probabilities of latent PA state during different times of day suggests benefits for missing data imputation. 
\begin{figure}[h]
\centerline{%
\includegraphics[width=0.5\linewidth]{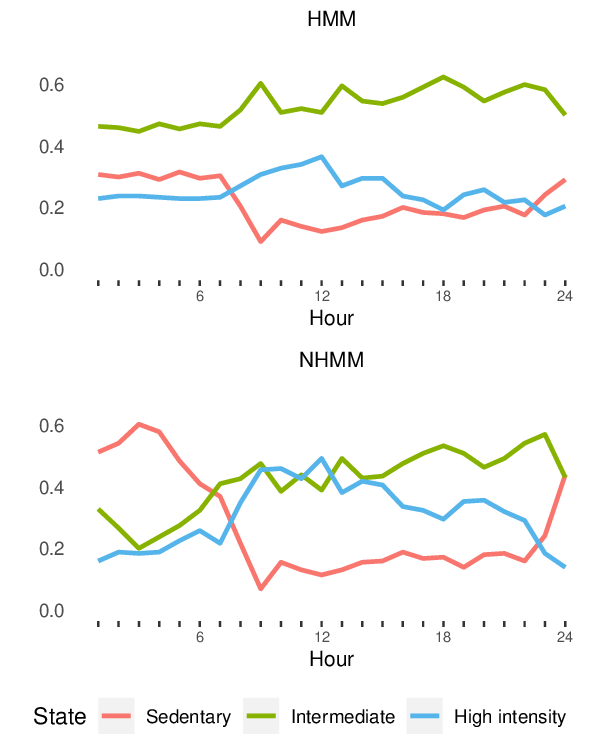}}
\caption{Posterior marginal probability of being in a  state.}
\label{fig:marginals}
\end{figure}

Figure \ref{fig:results} shows the remaining illustrative metrics for a representative set of individuals.
Posterior samples for the summaries are shown for the NHMM and HMM models, respectively in green and orange.
For the first two metrics - average total daily steps and average HR per minute -  estimates from raw Fitbit data with no missing data imputation are shown in blue for comparison.
The average step count was obtained by summing all the steps recorded in a day and then averaging the results across days. 
This implies that reliance on raw Fitbit with no imputation reports a summary that is biased towards lower totals due to its use of fewer observation times.
The HR per minute average was instead obtained averaging over the data within each day. 
By averaging over the observed data only, Fitbit generally reports higher estimates, this being in line with the patients being inclined to wear their device in periods of higher activity.
The comparison among NHMM and HMM shows higher estimates for steps and heart rate for the HMM in both cases.
This is also consistent with people being more likely to actually wear their device during periods of activity, with the HMM imputing values of steps and HR that are dominated by the bulk of the day when patients are not active and the NHMM inferring that periods of higher activity are more likely during certain hours and imputing values accordingly.
The posterior probability of being in the sedentary state and the average sedentary bout length confirm this difference between the results from the two models, coherently again with what shown in Figure \ref{fig:marginals}.
The HMM provides an estimate that flattens the states and values throughout the day, thus limiting its capability to capture the relatively less represented states in observed the data.
\begin{figure}[htbp]
  \centering
  \begin{minipage}{0.24\textwidth} 
    \caption*{Avg total steps per day
    } 
    \includegraphics[width=\linewidth]{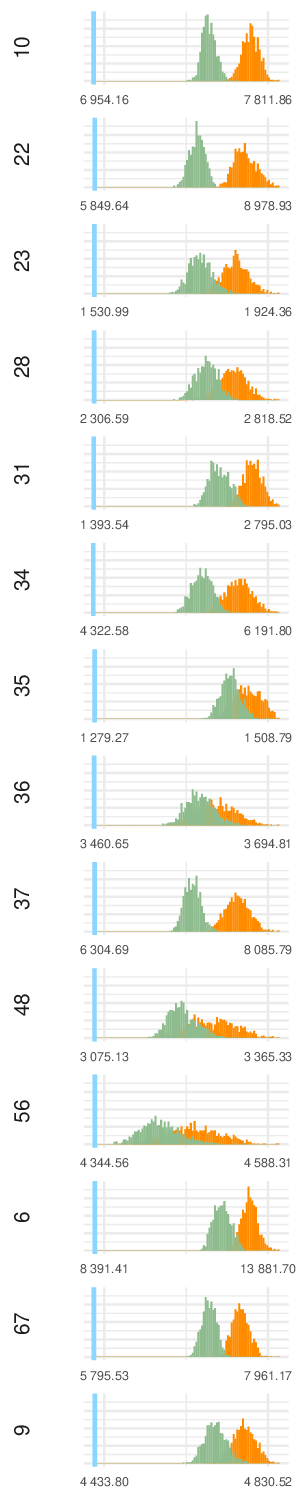}
  \end{minipage} 
  \begin{minipage}{0.24\textwidth}
    \caption*{Avg HR per minute} 
    \includegraphics[width=\linewidth]{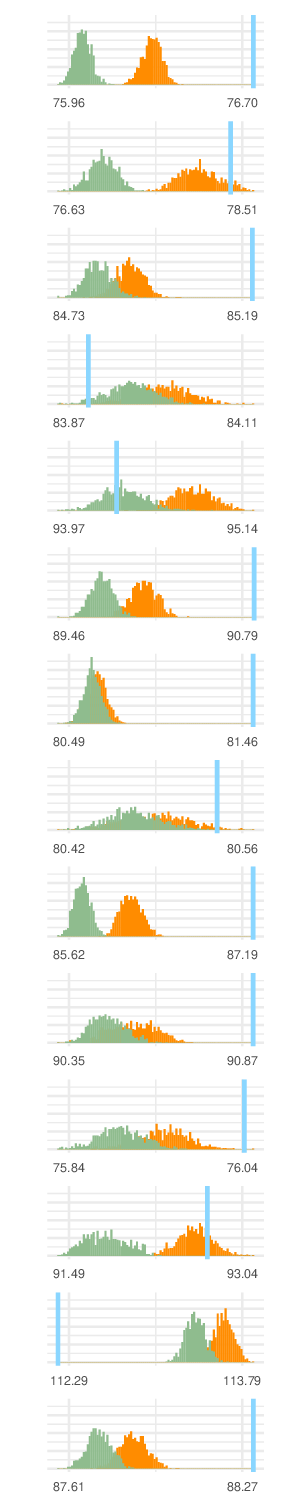}
  \end{minipage}
  \begin{minipage}{0.24\textwidth}
    \caption*{Prob. of sedentary state}     \includegraphics[width=\linewidth]{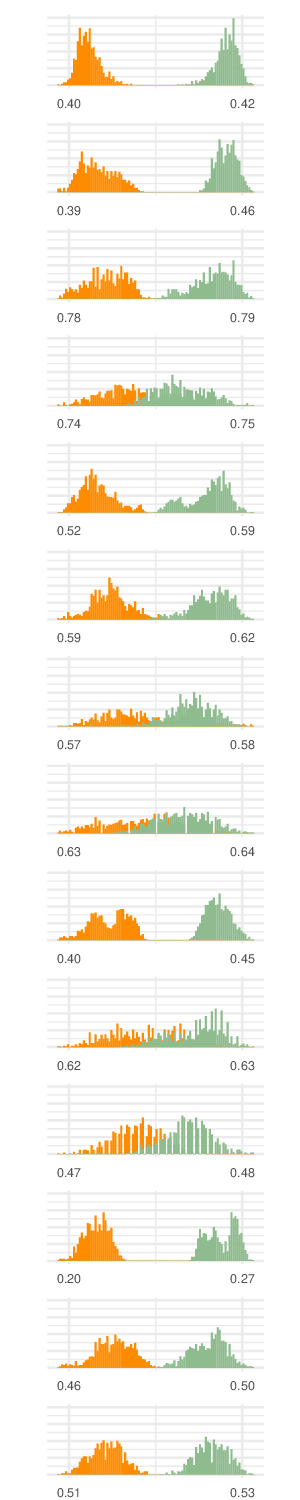}
  \end{minipage}
  \begin{minipage}{0.24\textwidth}
    \caption*{Avg night sedentary bout} 
    \includegraphics[width=\linewidth]{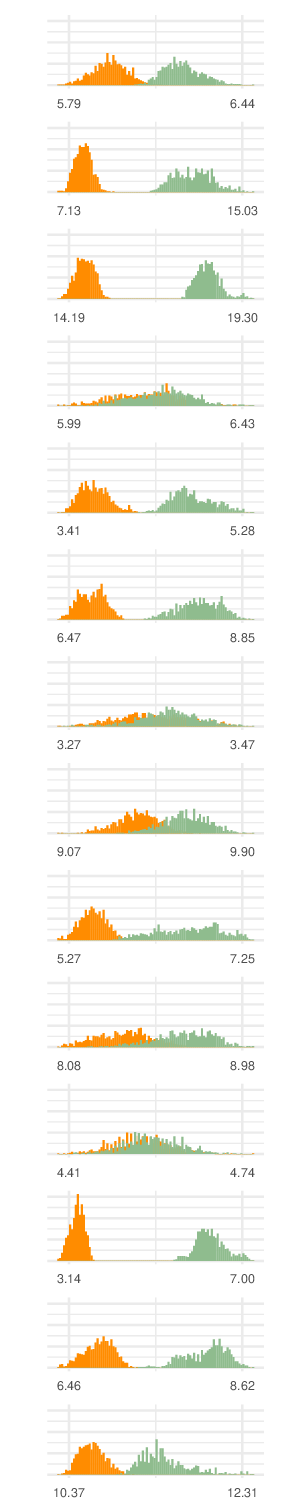}
  \end{minipage}
  \caption{Posterior samples of quantities of interest. Green refers to posterior estimates from the NHMM model, orange refers to posterior estimates from the HMM model, and blue to estimates obtained from raw Fitbit data. The bout length is expressed in hour quarters and corresponds to night hours.}
  \label{fig:results}
\end{figure}

\section{Conclusion}
\label{s:discuss}
We provide a formal consideration of missing data and ignorability in state space models, focusing on models for PA measured from wearable devices.
In particular, we focus on Bayesian HMMs and highlight the inability of common HMMs to make reliable inferences for and imputation of PA data in the presence of non-ignorable missingness.  In response, we offer a Bayesian NHMM to accommodate covariates in the model for transitions between latent states, and show how this permits satisfaction of ignorability conidtions when covariates relating to the missigness mechanism and underlying PA are available.
This expansion of the traditional HMM and attendant Bayesian implementation leads directly to improved imputation of missing data and quantification of uncertainty, which are two improvements over the available activity categorization and summaries provided by Fitbit.  We also highlight several practical implementation challenges relating to MCMC performance that have not received detailed attention in works employing similar NHMM implementations.  We provide several strategies, including a data augmentation prior on the marginal distribution of latent states, that showed potential to alleviate MCMC convergence issues in this type of model. In the context of a PA study of AYA cancer patients, we showed the potential for more reliable inferences and missing data imputation with our Bayesian NHMM compared to a standard HMM or a complete-case analysis.  The analysis of AYA cancer patients highlights the need to consider missing data mechanisms more complex than the commonly (and often implicitly) invoked MCAR and that modeling advances that can support the weaker MAR assumption by appropriately including covariates can produce practically meaningful differences in PA summaries.  

Future research could continue and improve this line of work.  For example, more general considerations for imbalanced categorical data such as those in \cite{johndrow2019mcmc} could possibly indicate MCMC strategies for NHMMs that substitute the Pólya-Gamma data augmentation with simpler Gibbs steps in presence of imbalanced latent activity categories.  Furthermore, improvements in computation might generate new directions in hierarchical modeling of individual PA data that borrow information across patients to generate population inferences and further improve missing data imputation. Information pooled among individuals with similar PA patterns might also serve to compensate for lack of information or complete separation in some patients and improve MCMC performance.  Finally, our analysis fixed the number of latent states to $K=3$ to correspond to common categorizations of physical activity.  Further exploration could estimate a different number of states for different individuals, possibly through reversible jump strategies.

\section*{Acknowledgments and Supplementary Material}
This work is partially supported by a grant from the Cancer Prevention and Research Institute of Texas (RP200381).
We thank the MD Anderson Cancer Center for providing us with the data, and Michelle Audirac for the preliminary work conducted on Fitbit data preprocessing.
Web Appendix A, B, C, D, E, F, G are available online.
The code for implementing our model is located on the author's Github page  Bayesian-NH-HMM-for-PA-Data.
\label{lastpage}

\bibliographystyle{apalike}
\bibliography{mybiblio}

\begin{thebibliography}{}

\bibitem[Agresti, 2012]{agresti2012categorical}
Agresti, A. (2012).
\newblock {\em Categorical data analysis}, volume 792.
\newblock John Wiley \& Sons.

\bibitem[Alhaddad et~al., 2023]{alhaddad2023longitudinal}
Alhaddad, A.~Y., Aly, H., Gad, H., Elgassim, E., Mohammed, I., Baagar, K., Al-Ali, A., Sadasivuni, K.~K., Cabibihan, J.-J., and Malik, R.~A. (2023).
\newblock Longitudinal studies of wearables in patients with diabetes: key issues and solutions.
\newblock {\em Sensors}, 23(11):5003.

\bibitem[Altman, 2007]{altman2007mixed}
Altman, R.~M. (2007).
\newblock Mixed hidden {Markov} models: an extension of the hidden {Markov} model to the longitudinal data setting.
\newblock {\em Journal of the American Statistical Association}, 102(477):201--210.

\bibitem[Balbim et~al., 2021]{balbim2021}
Balbim, G., Marques, I.~G., Marquez, D.~X., Patel, D., Sharp, L.~K., Kitsiou, S., and Nyenhuis, S.~M. (2021).
\newblock Using {Fitbit} as an mhealth intervention tool to promote physical activity: potential challenges and solutions.
\newblock {\em JMIR mHealth and uHealth}, 9(3):e25289.

\bibitem[Cai et~al., 2022]{cai2022state}
Cai, X., Wang, X., Zeng, L., Eichi, H.~R., Ongur, D., Dixon, L., Baker, J.~T., Onnela, J.-P., and Valeri, L. (2022).
\newblock State space model multiple imputation for missing data in non-stationary multivariate time series with application in digital psychiatry.
\newblock {\em arXiv preprint arXiv:2206.14343}.

\bibitem[Choi et~al., 2011]{choi2011validation}
Choi, L., Liu, Z., Matthews, C.~E., and Buchowski, M.~S. (2011).
\newblock Validation of accelerometer wear and nonwear time classification algorithm.
\newblock {\em Medicine and science in sports and exercise}, 43(2):357.

\bibitem[Daniels and Hogan, 2008]{daniels2008missing}
Daniels, M.~J. and Hogan, J.~W. ((2008)).
\newblock {\em Missing data in longitudinal studies: Strategies for Bayesian modeling and sensitivity analysis}.
\newblock CRC press.

\bibitem[de~Chaumaray et~al., 2020]{de2020mixture}
de~Chaumaray, M. D.~R., Marbac, M., and Navarro, F. (2020).
\newblock Mixture of hidden {Markov} models for accelerometer data.
\newblock {\em The Annals of Applied Statistics}, 14(4):1834--1855.

\bibitem[Gelman et~al., 1995]{gelman1995bayesian}
Gelman, A., Carlin, J.~B., Stern, H.~S., and Rubin, D.~B. (1995).
\newblock {\em Bayesian data analysis}.
\newblock Chapman and Hall/CRC.

\bibitem[Greenland and Christensen, 2001]{greenland2001data}
Greenland, S. and Christensen, R. (2001).
\newblock Data augmentation priors for bayesian and semi-bayes analyses of conditional-logistic and proportional-hazards regression.
\newblock {\em Statistics in medicine}, 20(16):2421--2428.

\bibitem[Held and Holmes, 2006]{held2006bayesian}
Held, L. and Holmes, C.~C. (2006).
\newblock Bayesian auxiliary variable models for binary and multinomial regression.
\newblock {\em Bayesian analysis}, 1(1):145--168.

\bibitem[Hicks et~al., 2019]{hicks2019best}
Hicks, J.~L., Althoff, T., Sosic, R., Kuhar, P., Bostjancic, B., King, A.~C., Leskovec, J., and Delp, S.~L. (2019).
\newblock Best practices for analyzing large-scale health data from wearables and smartphone apps.
\newblock {\em NPJ digital medicine}, 2(1):45.

\bibitem[Holsclaw et~al., 2017]{holsclaw2017bayesian}
Holsclaw, T., Greene, A.~M., Robertson, A.~W., and Smyth, P. (2017).
\newblock Bayesian nonhomogeneous {Markov} models via {P}{\'o}lya-{G}amma data augmentation with applications to rainfall modeling.
\newblock {\em The Annals of Applied Statistics}, 11(1):393--426.

\bibitem[Huang et~al., 2018]{huang2018hidden}
Huang, Q., Cohen, D., Komarzynski, S., Li, X.-M., Innominato, P., L{\'e}vi, F., and Finkenst{\"a}dt, B. (2018).
\newblock Hidden markov models for monitoring circadian rhythmicity in telemetric activity data.
\newblock {\em Journal of The Royal Society Interface}, 15(139):20170885.

\bibitem[Johndrow et~al., 2019]{johndrow2019mcmc}
Johndrow, J.~E., Smith, A., Pillai, N., and Dunson, D.~B. (2019).
\newblock Mcmc for imbalanced categorical data.
\newblock {\em Journal of the American Statistical Association}.

\bibitem[Karas et~al., 2019]{karas2019accelerometry}
Karas, M., Bai, J., Str{\k{a}}czkiewicz, M., Harezlak, J., Glynn, N.~W., Harris, T., Zipunnikov, V., Crainiceanu, C., and Urbanek, J.~K. (2019).
\newblock Accelerometry data in health research: challenges and opportunities: review and examples.
\newblock {\em Statistics in biosciences}, 11:210--237.

\bibitem[Langrock et~al., 2013]{langrock2013combining}
Langrock, R., Swihart, B.~J., Caffo, B.~S., Punjabi, N.~M., and Crainiceanu, C.~M. (2013).
\newblock Combining hidden {Markov} models for comparing the dynamics of multiple sleep electroencephalograms.
\newblock {\em Statistics in medicine}, 32(19):3342--3356.

\bibitem[Lu et~al., 2023]{lu2023bayesian}
Lu, J., Xiao, Q., and Bauer, C. (2023).
\newblock A bayesian circadian hidden markov model to infer rest-activity rhythms using 24-hour actigraphy data.
\newblock {\em arXiv preprint arXiv:2307.03832}.

\bibitem[Maruotti and Rocci, 2012]{maruotti2012mixed}
Maruotti, A. and Rocci, R. (2012).
\newblock A mixed non-homogeneous hidden {Markov} model for categorical data, with application to alcohol consumption.
\newblock {\em Statistics in Medicine}, 31(9):871--886.

\bibitem[Migueles et~al., 2017]{migueles2017accelerometer}
Migueles, J.~H., Cadenas-Sanchez, C., Ekelund, U., Delisle~Nystr{\"o}m, C., Mora-Gonzalez, J., L{\"o}f, M., Labayen, I., Ruiz, J.~R., and Ortega, F.~B. (2017).
\newblock Accelerometer data collection and processing criteria to assess physical activity and other outcomes: a systematic review and practical considerations.
\newblock {\em Sports medicine}, 47:1821--1845.

\bibitem[Polson et~al., 2013]{polson2013bayesian}
Polson, N.~G., Scott, J.~G., and Windle, J. (2013).
\newblock Bayesian inference for logistic models using {P}{\'o}lya--{G}amma latent variables.
\newblock {\em Journal of the American statistical Association}, 108(504):1339--1349.

\bibitem[Ren and Barnett, 2023]{ren2023combining}
Ren, B. and Barnett, I. (2023).
\newblock Combining mixed effects hidden {Markov} models with latent alternating recurrent event processes to model diurnal active-rest cycles.
\newblock {\em Biometrics}.

\bibitem[Ren et~al., 2022]{ren2022measuring}
Ren, B., Xia, C.~H., Gehrman, P., Barnett, I., and Satterthwaite, T. (2022).
\newblock Measuring daily activity rhythms in young adults at risk of affective instability using passively collected smartphone data: Observational study.
\newblock {\em JMIR Formative Research}, 6(9):e33890.

\bibitem[Rubin, 1976]{rubin1976inference}
Rubin, D.~B. (1976).
\newblock Inference and missing data.
\newblock {\em Biometrika}, 63(3):581--592.

\bibitem[Speekenbrink and Visser, 2021]{speekenbrink2021ignorable}
Speekenbrink, M. and Visser, I. (2021).
\newblock Ignorable and non-ignorable missing data in hidden {Markov} models.
\newblock {\em arXiv preprint arXiv:2109.02770}.

\bibitem[St~Fleur et~al., 2021]{st2021use}
St~Fleur, R.~G., St~George, S.~M., Leite, R., Kobayashi, M., Agosto, Y., and Jake-Schoffman, D.~E. (2021).
\newblock Use of {Fitbit} devices in physical activity intervention studies across the life course: narrative review.
\newblock {\em JMIR mHealth and uHealth}, 9(5):e23411.

\bibitem[Strain et~al., 2022]{strain2022considerations}
Strain, T., Wijndaele, K., Pearce, M., and Brage, S. (2022).
\newblock Considerations for the use of consumer-grade wearables and smartphones in population surveillance of physical activity.
\newblock {\em Journal for the Measurement of Physical Behaviour}, 5(1):8--14.

\bibitem[Vidotto et~al., 2020]{vidotto2020multiple}
Vidotto, D., Vermunt, J.~K., and Van~Deun, K. (2020).
\newblock Multiple imputation of longitudinal categorical data through bayesian mixture latent markov models.
\newblock {\em Journal of Applied Statistics}, 47(10):1720--1738.

\bibitem[Wang et~al., 2023]{wang2023bayesian}
Wang, E.~T., Chiang, S., Haneef, Z., Rao, V.~R., Moss, R., and Vannucci, M. (2023).
\newblock Bayesian non-homogeneous hidden markov model with variable selection for investigating drivers of seizure risk cycling.
\newblock {\em The annals of applied statistics}, 17(1):333.

\bibitem[Witowski et~al., 2014]{witowski2014using}
Witowski, V., Foraita, R., Pitsiladis, Y., Pigeot, I., and Wirsik, N. (2014).
\newblock Using hidden {Markov} models to improve quantifying physical activity in accelerometer data--a simulation study.
\newblock {\em PloS one}, 9(12):e114089.

\bibitem[Zucchini and MacDonald, 2009]{zucchini2009hidden}
Zucchini, W. and MacDonald, I.~L. (2009).
\newblock {\em Hidden {Markov} models for time series: an introduction using R}.
\newblock Chapman and Hall/CRC.

\end{thebibliography}

\newpage

\appendix

\newpage

\section*{Web Appendix A: Ignorability and missing data imputation}
One notable advantage of the Bayesian approach in the context of missing data is that, under the ignorability assumption, posterior estimation based on observed data directly yields the posterior predictive distribution of 
$\ybm$.
This facilitates the imputation of missing values while naturally incorporating uncertainty, thereby providing a robust framework for handling incomplete data.
\begin{proposition}
\label{pro2}
For a state space model with outcome $\ybt$, available observations $\ybo$, missing observations $\ybm$, states $\zbt$, parameters of interest $\boldsymbol{\theta}$, parameters governing the missing mechanism $\pp$, and  other observables $x_{1:T}$, posterior sampling of $\zbt$ and $\ybm$ can be done ignoring the missing mechanism when conditions given in Definition 1 in the main manuscript are satisfied.
\end{proposition}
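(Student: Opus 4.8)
The plan is to mirror the factorization argument already carried out for the observed-data posterior in Section \ref{s:ign2}, but this time to keep $\zbt$ and $\ybm$ explicit as quantities to be sampled rather than integrating them out, and then to read off the relevant full conditional distribution. First I would write the joint posterior of all unknowns together with the latent and missing quantities,
\begin{equation*}
p(\tb, \pp, \zbt, \ybm \mid \ybo, \mbt^y, x_{1:T}) \propto p(\tb,\pp)\, p(\mbt^y, \zbt, \ybo, \ybm \mid \tb, \pp, x_{1:T}),
\end{equation*}
and apply Condition \ref{cond2} to split $p(\tb,\pp) = p(\tb)p(\pp)$ and Condition \ref{cond1} to factor the complete-data joint as $p(\mbt^y \mid \zbt, \ybo, \ybm, \pp, x_{1:T})\, p(\zbt, \ybo, \ybm \mid \tb, x_{1:T})$, noting in particular that the missing-data factor carries no dependence on $\tb$ and the response factor carries none on $\pp$.

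Next I would strip the missing-data indicator of its dependence on the latent and missing values: Condition \ref{cond:a} removes $\ybm$ from $p(\mbt^y \mid \cdot)$ and Condition \ref{cond:b} then removes $\zbt$, leaving $p(\mbt^y \mid \ybo, \pp, x_{1:T})$. Collecting terms yields
\begin{equation*}
p(\tb, \pp, \zbt, \ybm \mid \ybo, \mbt^y, x_{1:T}) \propto \underbrace{p(\pp)\, p(\mbt^y \mid \ybo, \pp, x_{1:T})}_{\text{missing-data model}} \;\times\; \underbrace{p(\tb)\, p(\zbt, \ybo, \ybm \mid \tb, x_{1:T})}_{\text{response model}},
\end{equation*}
so that the joint posterior factors into a piece involving only $\pp$ (and observed quantities) and a piece involving only $(\tb, \zbt, \ybm)$. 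Consequently the full conditional $p(\zbt, \ybm \mid \tb, \pp, \ybo, \mbt^y, x_{1:T}) = p(\zbt, \ybm \mid \tb, \ybo, x_{1:T})$ depends on neither $\pp$ nor $\mbt^y$, which is precisely the claim: $\zbt$ and $\ybm$ can be drawn from the response model alone --- e.g.\ by a forward--backward recursion for $\zbt$ followed by a draw from the emission distribution for $\ybm$ --- ignoring the missing-data mechanism, and the resulting draws of $\ybm$ are samples from the posterior predictive distribution.

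I expect the only genuine obstacle to be bookkeeping rather than substance: ensuring that Conditions \ref{cond:a}--\ref{cond:b}, stated as conditional independencies, are invoked at the level of the full joint density and not merely for the marginal obtained after integrating out $\zbt$ and $\ybm$, and checking that the complete-data factorization licensed by Condition \ref{cond1} genuinely keeps $\pp$ out of the response factor $p(\zbt, \ybo, \ybm \mid \tb, x_{1:T})$. Once those points are verified the conclusion is immediate, and I would close by remarking that the same factorization shows the $\tb$-update likewise does not reference the missing-data model, so the entire Gibbs sampler for the response model and its imputations proceeds without any use of $\pp$ or the pattern $\mbt^y$ beyond the identification of which $y_t$ are missing.
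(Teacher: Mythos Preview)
Your argument is correct and lands on the same conclusion as the paper, but the route differs in a small and instructive way. The paper works directly with the conditional $p(\ybm, \zbt \mid \ybo, \mbt^y, x_{1:T}, \tb, \pp)$, writes it as a ratio of joint densities, and uses only Conditions~\ref{cond1}, \ref{cond:a}, and \ref{cond:b} to make the missing-data factor in the numerator match the one in the denominator and cancel; prior independence (Condition~\ref{cond2}) never enters because $\tb$ and $\pp$ are fixed in the conditioning set from the outset. You instead factor the full joint posterior $p(\tb, \pp, \zbt, \ybm \mid \ybo, \mbt^y, x_{1:T})$ and then read off the full conditional, which does require Condition~\ref{cond2} to separate the priors. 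What you gain is a slightly stronger conclusion --- posterior block independence of $(\tb, \zbt, \ybm)$ and $\pp$, so that the $\tb$-update likewise ignores the missing-data model, as you note at the end --- while the paper's version is marginally more economical for the stated proposition since it uses one fewer condition. Either argument is fine; your anticipated ``bookkeeping'' concerns about invoking Conditions~\ref{cond:a}--\ref{cond:b} at the joint-density level are exactly the points the paper's ratio manipulation handles, and they go through without difficulty.
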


\begin{proof}
    \begin{align*}
    p(\ybm, \zbt & \mid \ybo, \mbt^y, x, \tb, \pp), \\
    &= \dfrac{p(\ybm, \zbt, \ybo,  \mbt^y, x_{1:T}, \tb, \pp)}{p(\ybo, \mbt^y, x_{1:T}, \tb, \pp)}, \\
    &= \dfrac{p(\mbt^y\mid \ybm, \zbt, \ybo, x_{1:T}, \tb, \pp)   p(\ybm, \zbt,\ybo,  x_{1:T}, \tb, \pp)}
    {p(\mbt^y \mid \ybo, x_{1:T}, \tb, \pp)   p(\ybo, x_{1:T}, \tb, \pp)}, \\
\intertext{under condition (1) of Definition 1 from the main manuscript:}
    &= \dfrac{p(\mbt^y\mid \ybm, \zbt, \ybo, x_{1:T}, \pp)   p(\ybm, \zbt,\ybo,  x_{1:T}, \tb, \pp)}
    {p(\mbt^y \mid \ybo, x_{1:T},  \pp)   p(\ybo, x_{1:T}, \tb, \pp)}, \\
\intertext{under condition 3(a):}
    &= \dfrac{p(\mbt^y\mid \zbt, \ybo, x_{1:T}, \pp)   p(\ybm, \zbt,\ybo,  x_{1:T}, \tb, \pp)}
    {p(\mbt^y \mid \ybo, x_{1:T},  \pp)   p(\ybo, x_{1:T}, \tb, \pp)}, \\
\intertext{under condition 3(b):}
    &= \dfrac{p(\mbt^y\mid \ybo, x_{1:T}, \pp)   p(\ybm, \zbt,\ybo,  x_{1:T}, \tb, \pp)}
    {p(\mbt^y \mid \ybo, x_{1:T},  \pp)   p(\ybo, x_{1:T}, \tb, \pp)}, \\
    &\propto \dfrac{p(\ybm, \zbt,\ybo,  x_{1:T}, \tb, \pp)}
    {p(\ybo, x_{1:T}, \tb, \pp)}, \\
    \intertext{under condition (1)}
    &= p(\ybm, \zbt \mid \ybo, \tb).
\end{align*}
\end{proof}

\newpage
\section*{Web Appendix B: Derivation of posterior full conditional for $\boldsymbol{\zeta}_j$}
Denote by $X$ the $T \times p$ design matrix containing dummy variables indicating the hour of the day membership, with one hour-which we refer to as \textit{baseline hour}-taken out for identifiably purposes and by $X_t$ its $t^{th}$ row.
Denote by $Z$ the $T \times K$ binary matrix containing information about the \textit{incoming states}, meaning that the $t^{th}$ row of the matrix, $Z_t$,  takes value 1 in the column corresponding to $k$, where $z_{t-1} = k$, and zero elsewhere, so the transition distribution expressed in Equation (7) of the main manuscript 
can be written as
\begin{equation}\label{eq:entries2}
  q_{ij}(x) =   \frac{\exp \left(Z_t^{\prime}\xi_{ j}+X_t^{\prime} \boldsymbol{\beta}_j\right)}{\sum_{m=1}^K \exp \left( Z_t^{\prime}\xi_{m}+  X_t^{\prime} \boldsymbol{\beta}_m\right)}
 \end{equation}
with the parameters for state $K$ set to 0 for identifiability purposes.
The full conditional of interest is
\begin{align*}
    p&(\boldsymbol{\beta}, \boldsymbol{\xi} \mid \ybt, \zbt, \psi)  \propto p( \zbt \mid \boldsymbol{\beta}, \boldsymbol{\xi}) \cdot p(\boldsymbol{\beta}, \boldsymbol{\xi}) \\
    &= \prod_{k=1}^{K} p(\bb_k, \xx_k) \times \prod_{t=1}^{T} \prod_{k=1}^K \left[ \dfrac{exp \{  \textbf{Z}_t \xx_k + \textbf{X}_t \bb_k \}}{1+ \sum_{j=1}^{K-1}exp \{ \textbf{Z}_t \xx_j + \textbf{X}_t \bb_j \}} \right]^{I_{tk}}
\end{align*}
where $I_{tk} = \mathbbm{1}(z_t = k)$.
Denote $\boldsymbol{W}_t = (\boldsymbol{Z}_t, \boldsymbol{X}_t)$ and $\zz_j = (\xx_j, \bb_j)$, so that
\begin{equation*}
\label{eq:cond11}
    p(\zz \mid  -) = \prod_{k=1}^{K} p(\zz_k) \times \prod_{t=1}^{T} \prod_{k=1}^K \left[ \dfrac{exp \{ \boldsymbol{W}_t \zz_k \}}{1+ \sum_{j=1}^{K-1}exp \{  \boldsymbol{W}_t \zz_j \}} \right]^{I_{tk}}.
\end{equation*} 
Using the results on the conditional likelihood shown by \cite{held2006bayesian},
\begin{align*}
    &p(\zz_j \mid \zz_{-j}, -)  \propto p(\zz_j) \times \\
    &\times \prod_{t=1}^{T} \left[ \dfrac{exp \{  \boldsymbol{W}_t \zz_j - C_{tj} \}}{1+ exp \{ \boldsymbol{W}_t \zz_j - C_{tj}\}} \right]^{I_{tj}} \left[ \dfrac{1}{1+exp \{ \boldsymbol{W}_t \zz_j  - C_{tj}\}} \right]^{1-I_{tj}}
\end{align*}
\begin{equation}
    = p(\zz_j) \times \prod_{t=1}^{T}  \dfrac{[exp \{  \boldsymbol{W}_t \zz_j - C_{tj} \}]^{I_{tj}}}{[1+ exp \{ \boldsymbol{W}_t \zz_j - C_{tj}\}]} \label{eq:cond2}.
\end{equation}
for $C_{tj} = log \sum_{k \neq j} exp(\boldsymbol{W}_t \zz_k)$.
We can use the fact that, as in \citep{polson2013bayesian}, for $\omega$ distributed according to a Pólya-Gamma with parameters $b > 0$ and $0$ (i.e. $\omega \sim PG(b,0)$), $\eta$ a linear function of predictors and with $\kappa = a - b/2$, 
\begin{equation*}
\label{s:pg}
    \dfrac{exp\{ \eta \}^a}{(1 + exp\{ \eta \})^b} = 2^{-b} exp\{ \kappa \eta \} \int_0^{\infty} exp \{- \omega \eta^2 / 2 \} p(\omega) d \omega,
\end{equation*}
  to write (\ref{eq:cond2}) as
\begin{align*}
   & \propto p(\zz_j) \times
\prod_{t=1}^{T} exp\{ \kappa_{tj} (\boldsymbol{W}_t \zz_j - C_{tj})  \} \times \\
   &\times \int_0^{\infty} exp \{- \omega_{tj} (\boldsymbol{W}_t \zz_j - C_{tj})^2 / 2\} p(\omega_{tj}) d \omega_{tj}.
\end{align*}
using $a = I_{tj}$ and $b=1$.
Finally, by Theorem 1 from \cite{polson2013bayesian} we have that, for  a Gaussian prior $\zz_j \sim N(m_0, I \cdot \delta^2_0)$ and denoting by $\boldsymbol{\Omega}_{j}$ the diagonal matrix with the $\omega_{tj}$ in the diagonal,
 \begin{align*}
 &\omega_{t j} \mid \zz_j \sim \mathrm{PG}\left(1, \boldsymbol{W}_t \zz_j - C_{tj} \right) \\
     &\zz_{j} \mid \boldsymbol{\Omega}_{j} \sim N\left(m_{j}, V_{j}\right),
 \end{align*}
with $$V_{j}^{-1}=\left(\boldsymbol{W}_j^{\prime} \boldsymbol{\Omega}_{j} \boldsymbol{W}_j+ (I \cdot \sigma_{0}^{2})^{-1} \right)$$
and 
$$m_{j}=V_{j}\left(\boldsymbol{W}_j^{\prime}\left(\left(\boldsymbol{I}_j-1 / 2\right)-\right.\right.\left.\left.\boldsymbol{\Omega}_{j} \boldsymbol{C}_{j}\right)+(\sigma_0^2)^{-1} a_{j}\right)$$.

\newpage
\section*{Web Appendix C: Posterior full conditionals and Algorithm}
Detailed Gibbs sampler steps are as follows. 
The complete MCMC is summarized in Algorithm 1.

$\boldsymbol{p(\boldsymbol{\zeta}_j \mid -)}$
\\ Following the reasoning that appears in the Appendix, 
 \begin{align}
     &\zz_{j} \mid \boldsymbol{\Omega}_{j} \sim N\left(m_{j}, V_{j}\right) \label{eq:postzeta} \\
     &\omega_{t j} \mid \zz_j \sim \mathrm{PG}\left(1, \boldsymbol{W}_t \zz_j - C_{tj} \right), \nonumber
 \end{align}
with $$V_{j}^{-1}=\left(\boldsymbol{W}_j^{\prime} \boldsymbol{\Omega}_{j} \boldsymbol{W}_j+ (I \cdot \sigma_{0}^{2})^{-1} \right)$$
and 
$$m_{j}=V_{j}\left(\boldsymbol{W}_j^{\prime}\left(\left(\boldsymbol{I}_j-1 / 2\right)-\right.\right.\left.\left.\boldsymbol{\Omega}_{j} \boldsymbol{C}_{j}\right)+(\sigma_0^2)^{-1} m_{j}\right).$$

$\boldsymbol{p(\psi_j \mid -)}$
\\ Define $\psi_j = (\mu_j, \Sigma_j)$. 
We use the common normal-inverse Wishart model so that the update consists of drawing the variance covariance matrix from an inverse Wishart and the mean vector from a multivariate Gaussian conditioned on the drawn variance-covariance matrix.
\begin{align}
\Sigma_j \mid \zbt, \ybt & \sim \text { Inv-Wishart }_{\nu_{n_j}}\left(\Lambda_{n_j}\right) \label{eq:postsigma} \\
\mu_j \mid \Sigma_j, \zbt, \ybt & \sim \mathrm{N}\left(\mu_{n_j}, \Sigma_{j} \right) \label{eq:postmu}
\end{align}
with $\nu_{n_j}$, $\Lambda_{n_j}$ and $\mu_{n_j}$ as in \cite{gelman1995bayesian}.

$\boldsymbol{p(\ybm \mid -)}$
\begin{equation}
\label{eq:postym}
\textbf{y}_{t} \mid z_t = j \sim N(\mu_j, \Sigma_j) \qquad \forall t \text{ s.t. } m_t = 1
\end{equation}

$\boldsymbol{p(z_t \mid -)}$
\begin{equation}
\label{eq:postz}
z_t \mid Q, \mu, \Sigma , z_{t-1}, z_{t+1}, y_t \sim \operatorname{Multi}\left(\frac{q_{z_{t-1}, 1} f_1 q_{1, z_{t+1}} }{\sum_{j=1}^K q_{z_{t-1}, j} f_j q_{j, z_{t+1}} }, \ldots, \frac{q_{z_{t-1}, K} f_K q_{K, z_{t+1}} }{\sum_{j=1}^K q_{z_{t-1}, j} f_j q_{j, z_{t+1}} }\right),
\end{equation}
where $f_j = N(y_t \mid \mu_j, \Sigma_j)$.
\begin{algorithm}
\caption{Gibbs sampler for NHMM for PA data}
\begin{algorithmic}
\label{alg:all}
\Require A $T \times d$ matrix $Y$ of observed outcomes and entries for missing observations, a $T \times p$ design matrix $X$, fixed $K$, $m$, $\#init$, $\#burn$, $\#iters$.
\State Initialize $\boldsymbol{\zeta}^{(0)}, \mu_1^{(0)}, ..., \mu_K^{(0)}, \Sigma_1^{(0)}, ..., \Sigma_K^{(0)}, z_0^{(0)}, ..., z_T^{(0)}$.
\For{$i$ from $1$ to $\# init$}
\For{$j$ from 1 to $K$}
    \State Sample $\Sigma_{j}^{(i)}$ and $\boldsymbol{\mu_{j}}^{(i)}$ as in \ref{eq:postsigma} and \ref{eq:postmu},
\EndFor
    \State Sample $\ybm^{(i)}$ as in \refeq{eq:postym}, $\boldsymbol{\zeta}^{(i)}$ as in \refeq{eq:postzeta}, $\zbt^{(i)}$ as in \refeq{eq:postz},
\EndFor
\For{$j$ from 1 to $K$}
\State Set $Y_{aug,j} = (\bar{\mu}_{j}, ..., \bar{\mu}_{j})$ and $z_{aug,j}=(j,...,j)$ vectors of length 96 (or alternative length of daily data depending on the chosen data structure) $\times m$
\EndFor
\State Set $Y_{aug} = (Y, Y_{aug, 1}, ..., Y_{aug, K})$ and $z_{aug} = (\zbt, z_{aug, 1}, ..., z_{aug, K})$
\For{$i$ from $(\# init + \#burn + 1)$ to $(\# init + \#burn + \#iters)$}
\For{$j$ from 1 to $K$}
    \State Sample $\Sigma_{j}^{(i)}$ and $\mu_{j}^{(i)}$ as in \ref{eq:postsigma} and \ref{eq:postmu} using $Y_{aug}$ and $z_{aug}$,
\EndFor
    \State Sample $\ybm^{(i)}$ as in \ref{eq:postym}, $\boldsymbol{\zeta}^{(i)}$ as in \ref{eq:postzeta} using $z_{aug}$,
     $\zbt^{(i)}$ as in \ref{eq:postz}
    \If{$i > (\#init + \#burn)$}
    \State Save $\boldsymbol{\Sigma}^{(i)}$, $\boldsymbol{\mu}^{(i)}$, $\ybm^{(i)}$, $\boldsymbol{\zeta}^{(i)}$, $\zbt^{(i)}$
    \EndIf
\EndFor
\State \Return saved $\boldsymbol{\Sigma}^{(i)}$, $\boldsymbol{\mu}^{(i)}$, $\ybm^{(i)}$, $\boldsymbol{\zeta}^{(i)}$, $\zbt^{(i)}$.
\end{algorithmic}\label{algo}
\end{algorithm}


\section*{Web Appendix D: Data augmentation prior}
The data augmentation prior aims at stabilizing the learning process of the subset of $\boldsymbol{\zeta}$ parameters associated to hours presenting some degree of separation or lack of representation of certain states, this way ensuring a more stable learning process for the entire set of $\zz$.
Such prior is specified by augmenting the observed data with synthetic pseudo-data representing $m \times K$ days of specified values of $(y_t, z_t)$.
Values of $(y_t, z_t)$ are chosen to ensure representation of every state at every hour while also not imposing any particular activity pattern. 
More precisely, a set of assumed-to-be-observed $z_t$ having one $z_t = j$ for each $j$ in $1, ..., K$ and for each $t = 1, ..., 24 \times 4$ (where $24 \times 4$ is the number of daily hour quarters) is created.
For what concerns values of $y_t$, the complete MCMC algorithm is first run for 2000 initial iterations using the observed data only to get a first estimate of states emission parameters $\boldsymbol{\psi}_j$ for $j = 1, ...,K$. 
The obtained initial estimates for $\boldsymbol{\psi}_j$ are then used to generate values of $y_t$ coherently with the selected $z_t$.
This synthetic dataset is then replicated $m$ times, where $m$ controls the prior's strength, with higher $m$ encoding stronger prior information that mitigates the hour of day effect. 
When this initialization process is completed, the MCMC is initiated again using the augmented dataset.


\newpage
\section*{Web Appendix E: Complement to Hybrid Dataset Results}
Using an hybrid dataset as explained in Section 6 of the main manuscript does not allow to fully control for the strength of the connection $x \to \zbt$, since this relationship has to be taken as it exists in the data. 
Still, we construct a coefficient to capture such strength as $c \cdot \prod_{k=1}^{K} (max(p(z = k) - min(p(z=k)))$ where $c$ is the number of times within the daily 24 hours in which the most represented state (the one with the higher percentage) has changed according to the posterior estimates of daily behaviors (i.e. the coefficient is evaluated on the posterior marginal distributions for each patient). 
In Figure \ref{fig:combined2}, patients IDs are ordered according to such coefficient on the $x$-axis, paired with an increasing value of $\gamma$ used to construct the hybrid dataset so that the $x$-axis shows simultaneously increasing strength of $x \to \zbt$ and $x \to \mbt$ in the way described in the main manuscript.
\begin{figure}[htbp]
  \centering
  \begin{minipage}{0.45\textwidth} 
    \caption*{Medium \% of missingness} 
    \includegraphics[width=\linewidth]{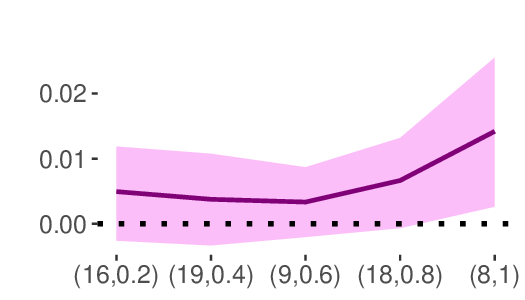}
  \end{minipage}
  \hspace{1cm} 
  \begin{minipage}{0.45\textwidth}
    \caption*{High \% of missingness} 
    \includegraphics[width=\linewidth]{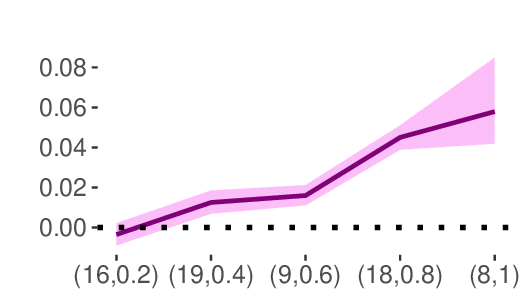}
  \end{minipage}
  \caption{Comparison of missing data imputation performance. Higher values correspond to a better performance by the NHMM relative to the HMM. The $x$ axis corresponds too (ID, $\gamma$) pairs.}
  \label{fig:combined2}
\end{figure}
While the trend with the medium missing percentage is less distinct, the case of high missing percentage clearly shows how the simultaneous increasing role played by the hour of the day on the availability of the data and on the states makes the NHMM more efficient of the HMM for missing data imputation.

Figure \ref{fig:mm} instead visualizes the estimated marginal probabilities of being in a certain state over the course of the 24 hours in a day for the two individuals which the figures in the main manuscript are referred to (ID8 and ID16). 
The goal is to show evidence that the two individuals for which the results are shown are not two individuals with exceptionally high strength of the $x \to \zbt$, this ruling out the hypothesis of the shown results being due to this feature.
ID8 shows quite some differences over the course of the day in terms of what is the most prevalent state, ID16 which shows instead stability in those terms. 
Hence, the two individuals can be considered as examples with clear different strength of the $x \to \zbt$ connection and the results cannot be interpreted as being due to some exceptionality in their PA routine. 
Similar results were obtained with other individuals that were part of the constructed hybrid dataset.
\begin{figure}[htbp]
  \centering
  \begin{minipage}{0.45\textwidth} 
    \caption*{ID8} 
    \includegraphics[width=\linewidth]{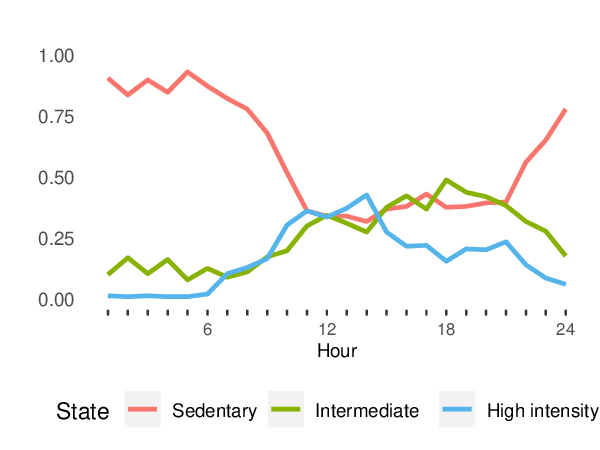}
  \end{minipage}
  \hspace{1cm} 
  \begin{minipage}{0.45\textwidth}
    \caption*{ID16} 
    \includegraphics[width=\linewidth]{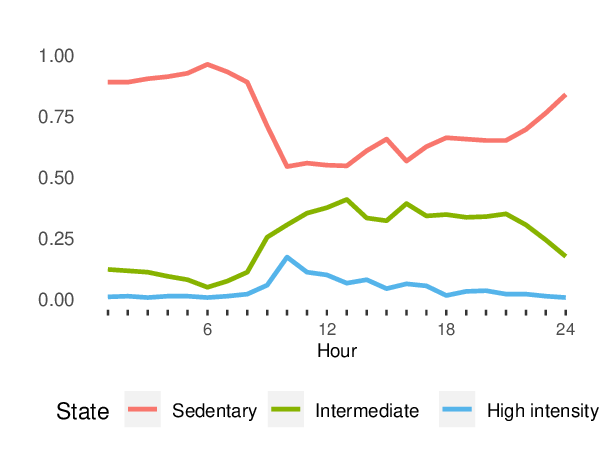}
  \end{minipage}
  \caption{Marginals probabilities of being in each state per hour. This estimate of the NHMM corresponds to the "medium percentage missingness" case of the hybrid dataset.}
  \label{fig:mm}
\end{figure}

\newpage

\section*{Web Appendix F: Simulations with fully generated data}
\label{s:sim}
We assume $\yb$ to be a bivariate vector of observations mimicking observations of steps and heart rate as collected by Fitbit and aggregated in time blocks $t = 1, ..., T$ of 15 minutes each.
We set $x_t$ to be the hour of the day corresponding to $t$. 
We simulate data according to transition matrices whose vectors of entries are obtained as 
\begin{equation}
   \boldsymbol{q}_t = \boldsymbol{q}_{0,t} + \nu \cdot \boldsymbol{q}_{h,t},
\end{equation}
where $\boldsymbol{q}_0$ is a vector of entries such that the generated data correspond to the case $x \perp \zbt$, while $\boldsymbol{q}_h$ is a vector of entries corresponding to the case in which the hour of the day is highly relevant for determining the underlying state.  
The parameter $\nu$ is set to govern the degree of influence of hour of the day in determining transitions, i.e. the strength of $x \to \zbt$. 
After generating full time series of $\yb$, we simulate missing data through artificially deleting simulated values according to a $Ber(\boldsymbol{p}_t)$ distribution with 
\begin{equation}
\boldsymbol{p}_t = (1-\gamma) \cdot \boldsymbol{p}_{0,t} + \gamma \cdot \boldsymbol{p}_{h,t},
\end{equation}
with $\boldsymbol{p}_0$ corresponding to the case of missingness being MCAR and $\boldsymbol{p}_h$ corresponding to the case of $x \to \mbt$ being a very strong connection. Intuitively, the higher $\gamma$ and $\nu$, the higher the relevance of the hour of the day in determining $\zbt$ and $\mbt$. 

Those simulations are meant to test the practical implications of our theoretical claims.
In particular, we want to test to what extent and under what circumstances the NHMM allows for tangibly better performances when compared to a classic HMM.
To perform such evaluation, we compare the HMM to our NHMM both in terms of parameter estimation and missing data imputation. 
With fully simulated data, we have much more room for comparing the outcomes to the true values than when using the hybrid dataset and to directly control the strength of $x \to \zbt$.
The former is measured according to the following metrics; the average Frobenius distance between the true and estimated transition matrices, the average distance between the estimated and true marginal probability of membership in each state, and the percentage of $\zbt$ that have been allocated to the correct true latent state. 
Performance of missing data imputation is measured looking at the RMSE of imputed vs true values (for both heart rate and the number of steps) and the percentage of missing observations that have their latent state $z_t$ allocated correctly. 

Results are displayed in Figure 
\ref{fig:combined3}. 
In those visualizations, increasing values on the $x$-axis correspond to an increasing strength in $x \to \zbt$ and $x \to \mbt$, while values on the $y$-axis correspond to the comparison in terms of performance between the two models, with higher values on the latter corresponding to the NHMM outperforming the HMM for all the metrics considered.
The two models perform similarly in correspondence of the dotted horizontal line.

The results confirm our theoretical claims, with the NHMM performing better whenever $x$ becomes relevant for $\mbt$ and $\zbt$, this being true for all our metrics and for both parameters estimation and missing data imputation. 
The key outcome is again that, the more the data generating process resembles a scenario not breaking ignorability, the NHMM is not necessarily a needed choice, with it becoming instead needed when the data generating process breaks condition 3(b) from the main manuscript. 
When both the strength of $x \to \mbt$ and $x \to \zbt$ is low or negligible, a model with less parameters such as the HMM is even preferable. 
The higher complexity and number of parameters of the NHMM has to be preferred when $\gamma$ and $\nu$ are higher, this being the case of PA data obtained in free living conditions as elaborated in Section 2 of the main paper.
Note that tests were also performed under scenarios of mixed strength of $\gamma$ and $\nu$, bringing to comparable results.

\begin{figure}[htbp]
  \centering
  \begin{minipage}{0.33\textwidth} \subcaption{Steps} 
    \includegraphics[width=\linewidth]{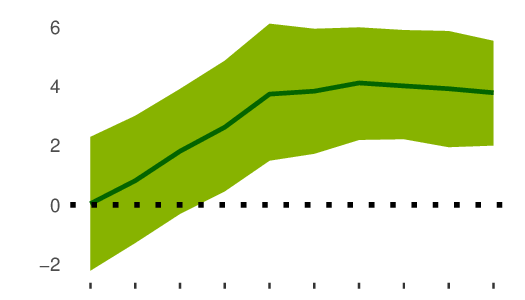}
  \end{minipage}%
  \begin{minipage}{0.33\textwidth}
    \subcaption{Heart rate} 
    \includegraphics[width=\linewidth]{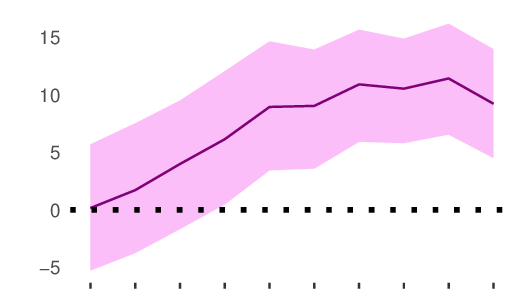}
  \end{minipage}%
  \begin{minipage}{0.33\textwidth}
    \subcaption{State allocation (all)}
    \includegraphics[width=\linewidth]{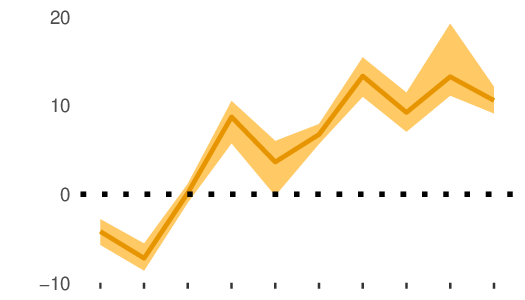}
  \end{minipage}

  \vspace{0.5cm} 
  \begin{minipage}{0.33\textwidth}
    \subcaption{Transition matrices}
    \includegraphics[width=\linewidth]{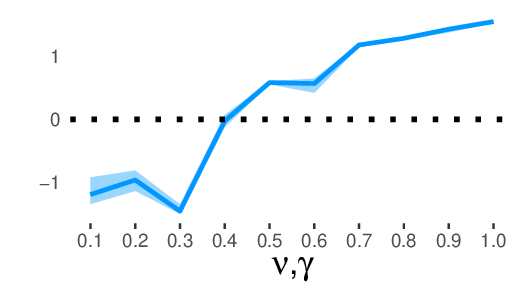}
  \end{minipage}%
  \begin{minipage}{0.33\textwidth}
    \subcaption{Marginal per hour}
    \includegraphics[width=\linewidth]{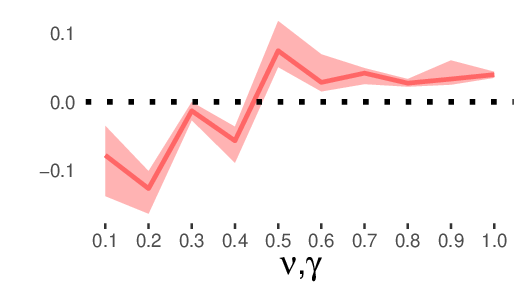}
  \end{minipage}%
  \begin{minipage}{0.33\textwidth}
    \subcaption{State allocation ($\mbt$)}
\includegraphics[width=\linewidth]{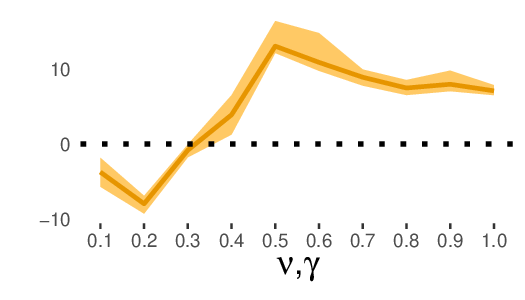}
  \end{minipage}

  \caption{Comparing HMM and NHMM under different scenarios captured by $\nu$ and $\gamma$, using different performance metrics. The y-axis represents the difference in performance between the two models; the higher the value on the y, the higher the performance of the NHMM relatively to the HMM. The dotted line corresponds to the performance of the two models being equivalent.}
  \label{fig:combined3}
\end{figure}

\newpage
\section*{Web Appendix G: Convergence}
Here we show the trace-plots showing the convergence of the parameters $\boldsymbol{\zeta}$ corresponding to the inference shown in Figure 4 of the main manuscript.
The first two figures correspond to the traceplots obtained when our prior on the marginal state probabilities was used.
The second two plots correspond to the case in which such strategy was not in place.
\begin{figure}[htbp]
    \centering
\includegraphics[width=\textwidth]{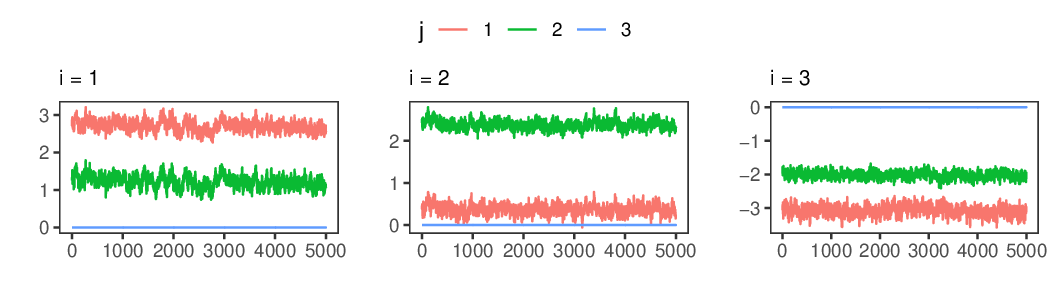}
\caption{Posterior draws for $\boldsymbol{\xi_{ij}}$ with data augmentation prior.}
    \label{fig:traces}
\end{figure}

\begin{figure}[htbp]
    \centering
\includegraphics[width=\textwidth]{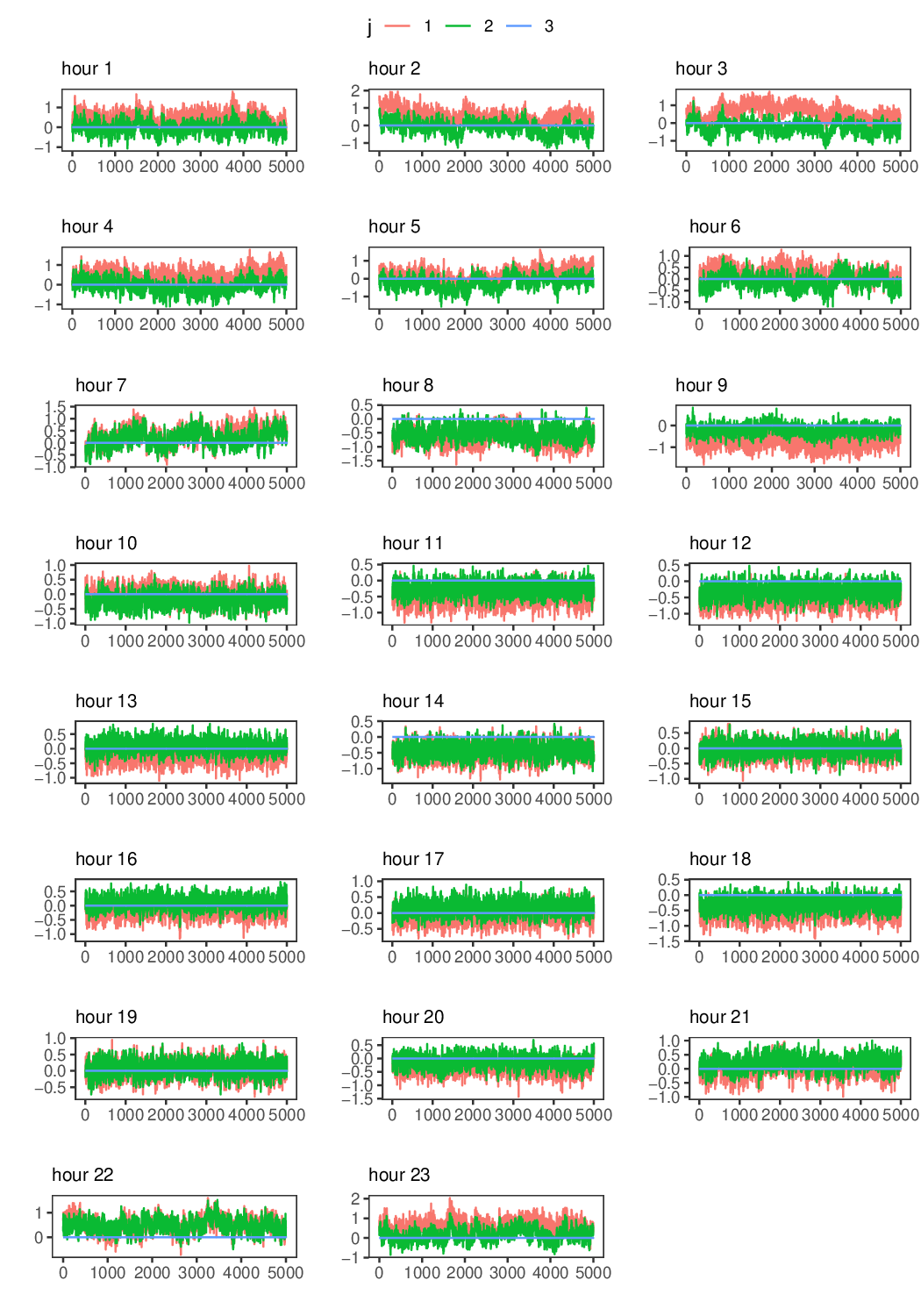}
\caption{Posterior draws for $\boldsymbol{\beta_j}$ with data augmentation prior.}
    \label{fig:traces22}
\end{figure}
\begin{figure}[htbp]
    \centering
\includegraphics[width=\textwidth]{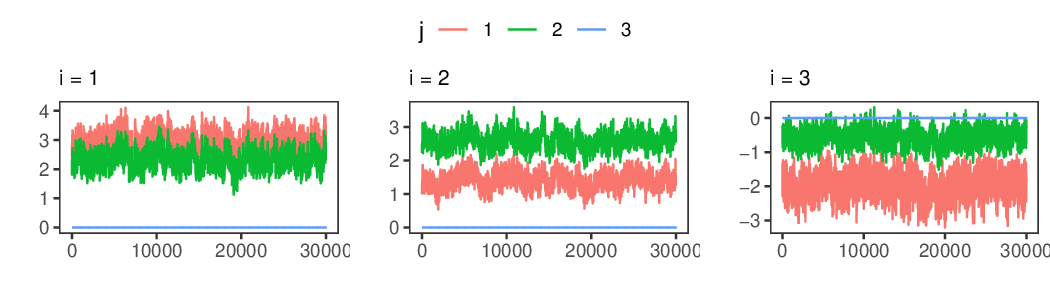}
\caption{Posterior draws for $\boldsymbol{\xi_{ij}}$ without data augmentation prior.}
    \label{fig:traces}
\end{figure}

\begin{figure}[htbp]
    \centering
\includegraphics[width=\textwidth]{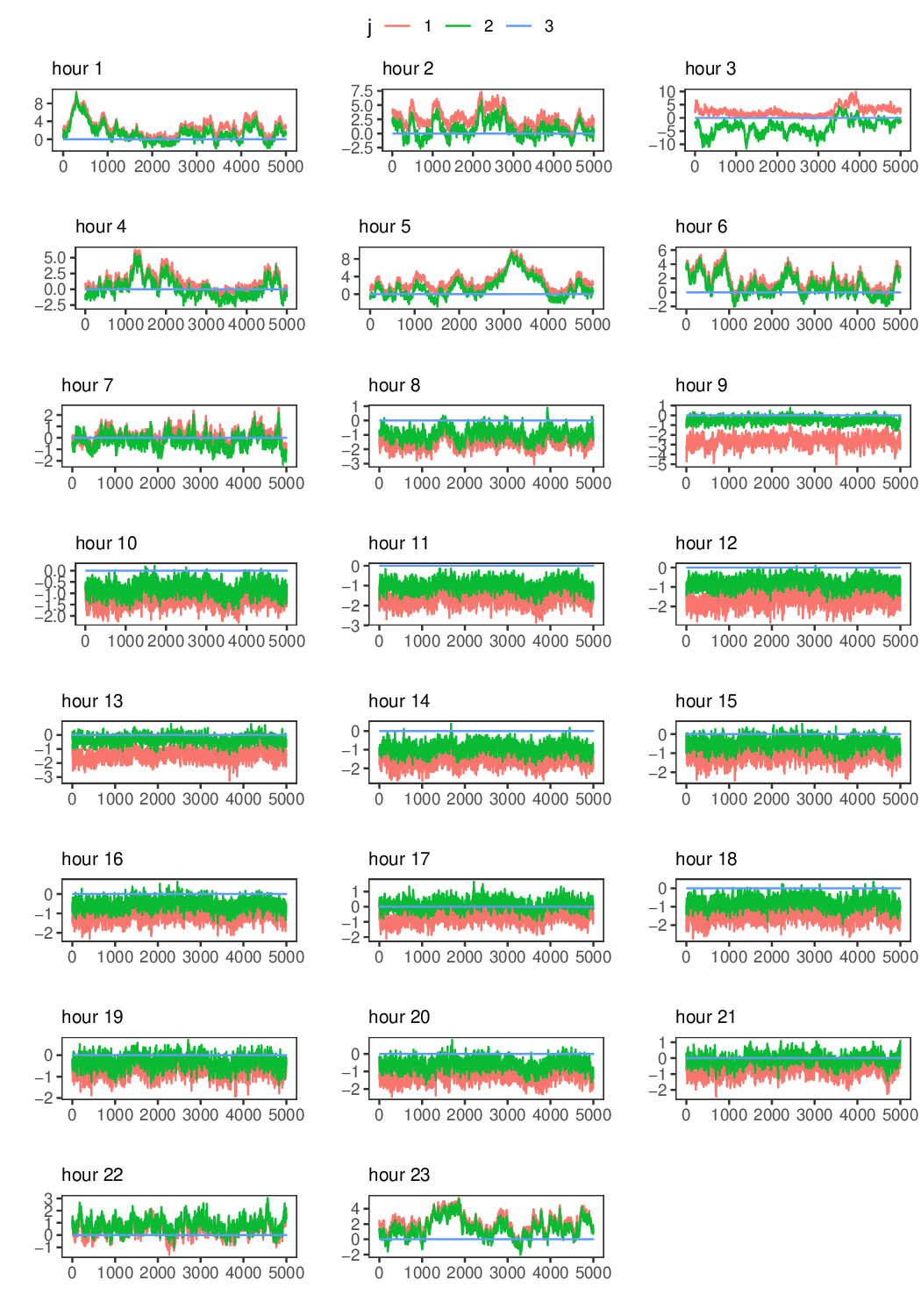}
\caption{Posterior draws for $\boldsymbol{\beta_j}$ without data augmentation prior.}
    \label{fig:traces22}
\end{figure}

\end{document}